\newlength{\figurewidth}
\newcounter{tempEquationCounter}
\newcounter{thisEquationNumber}
\begin{document}

\title{Delay Performance of Wireless Communications with Imperfect CSI and Finite Length Coding}

\author{Sebastian~Schiessl,~\IEEEmembership{Student Member,~IEEE,}
Hussein~Al-Zubaidy,~\IEEEmembership{Senior~Member,~IEEE,}
Mikael~Skoglund,~\IEEEmembership{Senior~Member,~IEEE,}
and~James~Gross,~\IEEEmembership{Senior~Member,~IEEE}
\thanks{The authors are with the School of Electrical Engineering, KTH Royal Institute of Technology, Stockholm, Sweden (e-mail: schiessl@kth.se, hzubaidy@kth.se, skoglund@kth.se, james.gross@ee.kth.se).}
}

\date{\today}

\maketitle

\newcommand{\pow}{{\mathcal{P}}}
\newcommand{\snravg}{{\bar{\gamma}}}
\newcommand{\pout}{{\varepsilon_\mathrm{out}}}
\newcommand{\epsout}{\varepsilon'_\mathrm{out}}
\newcommand{\tr}{{\mathrm{T}}}

\newcommand{\nt}{{m}}
\newcommand{\nd}{{n}}
\newcommand{\ntotal}{{n_\mathrm{slot}}}
\newcommand{\epsupper}{\varepsilon_\mathrm{upper}}

\newcommand{\hhat}{{\hat{h}_\mathrm{scaled}}}
\newcommand{\Hhat}{{\hat{H}_\mathrm{scaled}}}
\newcommand{\Hmmse}{{\hat{H}}}
\newcommand{\hmmse}{{\hat{h}}}
\newcommand{\expected}[1]{\mathbb{E}\left[#1\right]}
\newcommand{\expectedwrt}[2]{\mathbb{E}_{#1}\left[#2\right]}
\newcommand{\Prob}[1]{\mathbb{P}\left\{#1\right\}} 

\newcommand{\dispersion}{{\mathcal{V}}}
\newcommand{\backoffsnr}{{\beta}}
\newcommand{\arrpersymb}{{\bar{\alpha}}}

\newcommand{\Xvec}{{\vec{X}}}
\newcommand{\xvec}{{\vec{x}}}
\newcommand{\Noisevec}{{\vec{N}}}
\newcommand{\Yvec}{{\vec{Y}}}

\newcommand{\noisecomb}{{N}}
\newcommand{\noisealt}{{Z}}
\newcommand{\noisealtrot}{{\bar{Z}}}
\newcommand{\corrcoff}{{\rho}}
\newcommand{\Ub}{{U_\FBL}}
\newcommand{\deltau}{{U_\delta}}
\newcommand{\Uc}{{U_\CSIFBLshort}}
\newcommand{\Uctheorem}{{\overline{U}_\CSIFBLshort}}
\newcommand{\Ue}{{U_e}}
\newcommand{\sigmaa}{{\sigma_\iCSI}}
\newcommand{\sigmaasq}{{\sigma^2_\iCSI}}
\newcommand{\sigmab}{{\sigma_\FBL}}
\newcommand{\sigmabsq}{{\sigma^2_\FBL}}
\newcommand{\sigmac}{{\sigma_\CSIFBLshort}}
\newcommand{\sigmacsq}{{\sigma^2_\CSIFBLshort}}
\newcommand{\sigmanoise}{\sigma_N}
\newcommand{\Snr}{{\Gamma}}
\newcommand{\Snrcondmeas}{{\Snr_{|\snrmmse}}}
\newcommand{\Hcondmeas}{{H_{|\hhat}}}
\newcommand{\Snrapprox}{{\widetilde{\Snr}}}
\newcommand{\snr}{{\gamma}}

\newcommand{\expgoodput}{{\overline{r}}}
\newcommand{\rate}{{r}}
\newcommand{\fraccap}{{\kappa}}
\newcommand{\Rate}{{R}}
\newcommand{\ratefin}{{r_\FBL}}
\newcommand{\ratecomb}{{r_\CSIFBLshort}}
\newcommand{\snrmmse}{{\hat{\snr}}}
\newcommand{\Snrmmse}{{\hat{\Snr}}}
\newcommand{\Snrerr}{{\tilde{\Snr}}}
\newcommand{\snrerr}{{\tilde{\snr}}}
\newcommand{\snrerrgauss}{{\snrerr_\mathrm{G}}}
\newcommand{\Snrerrgauss}{{\Snrerr_\mathrm{G}}}
\newcommand{\snrerrdelta}{{\snrerr_\delta}}
\newcommand{\Snrerrdelta}{{\Snrerr_\delta}}
\newcommand{\Zbecap}{{Z}}
\newcommand{\CDF}{{F}}
\newcommand{\Ssnr}{{\mathcal{S}}}

\newcommand{\Ssnri}{{\Ssnr_i}}
\newcommand{\Sbiti}{{S_i}}
\newcommand{\Mellin}{\mathcal{M}}
\newcommand{\Yfun}{Y}
\newcommand{\s}{{\theta}}
\newcommand{\ta}{{\tau}}
\newcommand{\tb}{{t}}

\newcommand{\fblequivlower}{{\mathrm{b,lower}}}
\newcommand{\fblequiv}{{\mathrm{b}}}
\newcommand{\FBL}{{\mathrm{FBL}}}
\newcommand{\FBLshort}{{\mathrm{F}}}
\newcommand{\iCSI}{{\mathrm{ICSI}}}
\newcommand{\iCSIshort}{{\mathrm{IC}}}
\newcommand{\perrorexact}{{ \varepsilon }}
\newcommand{\perrornorm}{{ \varepsilon }}
\newcommand{\epstarget}{{ \varepsilon' }}
\newcommand{\epsappendix}{{ \varepsilon' }}
\newcommand{\perrorpositive}{{ p_\mathrm{und} }}
\newcommand{\fblequivknown}{{  \mathrm{b},\pCSI  }}
\newcommand{\CSIFBL}{{  \mathrm{\iCSI,FBL}  }}
\newcommand{\CSIFBLshort}{{  \mathrm{IC,F}  }}

\newcommand{\Udeltab}{{U_{\delta}}}

\newcommand{\magsq}[1]{\left|#1\right|^2}
\newcommand{\defined}{{\;\overset{\Delta}{=}\;}}

\newcommand{\base}{e}
\newcommand{\Mfun}[1]{\mathcal{K}\left(#1\right)}
\newcommand{\Mfunt}[1]{\mathcal{K}\left(#1\right)}
\newcommand{\pv}{{p_\mathrm{v}}}
\newcommand{\minx}{{(min,$\times$)}}
\newcommand{\Abit}{\mathit{A}}
\newcommand{\Dbit}{\mathit{D}}
\newcommand{\Sbit}{\mathit{S}}
\newcommand{\Abitcum}{{\mathbf{A}}}
\newcommand{\Dbitcum}{{\mathbf{D}}}
\newcommand{\Sbitcum}{{\mathbf{S}}}
\newcommand{\Asnrcum}{{\boldsymbol{\mathcal{A}} }}
\newcommand{\Ssnrcum}{{\boldsymbol{\mathcal{S}} }}
\newcommand{\Ai}{{\mathit{A}_i}}
\newcommand{\Di}{{\mathit{D}_i}}
\newcommand{\Si}{{\mathit{S}_i}}
\newcommand{\Delay}{{\mathit{W}}}
\newcommand{\Asnr}{\mathcal{A}}
\newcommand{\Dsnr}{\mathcal{D}}
\newcommand{\Wsnr}{\mathcal{W}}
\newcommand{\Xsnr}{\mathcal{X}}
\newcommand{\Ysnr}{\mathcal{Y}}
\newcommand{\ptarget}{\hat{p}}
\newcommand{\snrfun}{{f}}

\newtheorem{lemma}{Lemma}
\newtheorem{theorem}{Theorem}
\newtheorem{conjecture}{Conjecture}
\newtheorem{assumption}{Assumption}
\newtheorem{corollary}{Corollary}


\begin{abstract}

With the rise of critical machine-to-machine applications, next generation wireless communication systems must be designed with strict constraints on the latency and reliability.
A key question in this context relates to channel state estimation, which allows the transmitter to adapt the code rate to the channel.
In this work, we characterize the trade-off between the estimation sequence length and data codeword length: shorter channel estimation leaves more time for the actual payload transmission but reduces the estimation accuracy and causes more decoding errors. Using lower coding rates can mitigate this effect, but may result in a higher backlog of data at the transmitter. 
We analyze this trade-off using queueing analysis on top of accurate models of the physical layer, which also account for the finite blocklength of the channel code.
Based on a novel closed-form approximation for the error probability given the rate, we show that finding the optimal rate adaptation strategy becomes a convex problem. 
The optimal rate adaptation strategy and the optimal training sequence length, which both depend on the latency and reliability constraints of the application, can improve the delay performance by an order of magnitude, compared to suboptimal strategies that do not consider those constraints.




\end{abstract}

\begin{IEEEkeywords}
Finite blocklength regime, imperfect CSI, rate adaptation, quasi-static fading, queueing analysis
\end{IEEEkeywords}

\section{Introduction}
\label{sec:introduction}
While wireless networks have traditionally been optimized for the typical requirements of human-related services (which includes voice communication as well as Internet applications), a new class of machine-to-machine applications has been arising over the last decade.
This class can be separated into two groups.
The first group consists of so called massive machine-to-machine applications, where for example sensor readings need to be conveyed to a central data collection point. 
For such applications, the major challenges are the energy efficiency and the scalability of the communication service to potentially thousands of terminals. However, the requirements on the latency and reliability of the communication are only moderate in this case.
The second group contains critical machine-to-machine applications, which are foremost encountered in the context of industrial automation and are traditionally realized by specialized wired networks.
Due to increasing flexibility demands, and in order to enable entirely new designs of automation systems, wireless transmissions become more and more attractive.
Critical machine-to-machine applications typically generate small payloads periodically or at event-triggered time points, and require transmission with very low latency and ultra-high reliability.
For instance, automation applications from manufacturing easily require communication latencies between a sensor and a control unit below $1$ ms, as well as packet delivery ratios (with respect to that deadline) of $1 - 10^{-6}$ and above.
This area of critical machine-to-machine communications still poses significant challenges with respect to wireless network design.

Traditionally, physical layer analysis has been based on the assumption that error-free transmissions can be achieved through channel coding at Shannon's channel capacity, which is a fairly accurate model when the blocklength of the channel code is very large. However, with target latencies below $1$~ms, systems will only be able to spend a small number of symbols onto a single transmission, thus the blocklength becomes small, resulting in a significant performance loss due to channel coding at finite blocklength. 
In the finite blocklength regime, transmission errors occur due to ``above average" noise occurrences. Although the transmitter can reduce the probability of transmission errors by selecting a rate lower than the channel capacity, transmission errors are inevitable \cite{polyanskiy2010channel}. 
Understanding the implications from finite blocklength coding in combination with the fading effects of wireless communication channels is fundamental to the efficient design of ultra-reliable low latency wireless networks.
An important question in this context relates to the most efficient transmission strategy, in particular, if and how the transmitter should adapt the rate of the channel code to the instantaneous state of the wireless channel. This is still an open question in the context of low-latency communications, as rate adaptation requires time to estimate the current channel state, which reduces the already short time for data transmission. 
While spending more time on channel estimation improves the accuracy of the estimate and allows the transmitter to send the payload more reliably, this decreases the amount of time left for payload transmission even further. The shortened payload transmission duration is penalized also through a higher sensitivity to the finite blocklength effects.
This trade-off between the time spent on channel estimation and the time spent on payload transmission cannot be characterized by a purely information-theoretic analysis, as an adaptation to the channel state leads to a time-varying transmission rate that affects the backlog, and hence the latency, at the transmitter.
Therefore, in addition to the impact of finite blocklength and imperfect channel state information (CSI) on the physical layer performance, queueing effects on the link layer must be considered to address this problem.
To our best knowledge, a performance analysis that takes all these effects into account does not exist in the literature. 

As this work relates to both information theory and communication networking, we build our work on literature in both fields. 
Concerning research in information theory, the comprehensive analysis of the theoretical limits of finite blocklength channel coding by Polyanskiy et al. \cite{polyanskiy2010channel} was extended by Yang et al. to block-fading channels \cite{yang2012diversity, yang2013quasi,yang2014quasi}. 
Surprisingly, the authors found that for many types of fading channels, the maximum achievable data rate shows little dependency on the blocklength and is in fact well approximated by the outage capacity. These works even accounted for the fact that the channel state may not be known \emph{a priori}, i.e. at the start of the transmission. 
However, the data rate was assumed to be fixed and rate adaptation with imperfect CSI at the transmitter was not considered.
Analysis of imperfect CSI often focuses on the receiver side: imperfect CSI at the receiver (CSIR) will cause an error in the amplitude and phase of the signal during demodulation and decoding, which can cause errors. 
M{\'e}dard \cite{medard2000effect} investigated this by computing the mutual information of a system with imperfect CSIR.
Hassibi and Hochwald \cite{hassibi2003much} investigated the impact of imperfect CSIR on the ergodic capacity in multi-antenna systems. 
However, those information-theoretic results are based on statistical averages of the estimation error and therefore only apply when decoding is performed over infinitely many fading blocks, which would cause infinite delay. 
Moreover, rate adaptation at the transmitter cannot be analyzed using such models. 
The authors in \cite{mengyang2014constellation,yangmeng2015new} studied rate adaptation at finite blocklength with restricted input alphabets, as well as performance bounds for binary-input channels, but did not consider imperfect CSI. 
Yang et al. \cite{yang2015optimum} also studied the finite-blocklength performance when the transmitter adapts the power (but not the coding rate) to the perfectly known channel state.
Finally, Lim and Lau \cite{lim2008fundamental} and Lau et al. \cite{lau2006cross} studied rate adaptation where the transmitter has imperfect or outdated CSI, but considered neither finite blocklength effects nor the impact of transmission errors on the delay. 

In the field of communication networks, queueing theory has been used extensively to analyze the delay performance of wireless networks. While wireless network analysis poses a significant challenge to traditional queuing theory, several techniques have been developed in the last decade to address this challenge.  
Wu and Negi \cite{wu2003effective} developed the framework of effective capacity that provides approximations on the delay performance, which are however asymptotic, i.e. only valid for long delays. 
Al-Zubaidy et al. \cite{alzubaidy2016ton} used stochastic network calculus in a transform domain, which not only provides non-asymptotic bounds on the delay performance, but can also be extended for the analysis of multi-hop wireless links \cite{alzubaidy2016ton, petreska2015recursive}.
Finite blocklength effects and imperfect CSI have been separately studied with respect to their impact on the queueing performance. 
Wu and Jindal \cite{wu2011coding} analyzed a system with automatic repeat requests in fading channels at finite blocklength. 
Gursoy \cite{gursoy2013throughput} computed the effective capacity for block-fading channels at finite blocklength and showed that there is a unique optimum for the error probability. 
In our own previous work \cite{schiessl2015delay}, we extended this analysis using stochastic network calculus and provided analytical solutions for finite blocklength coding in Rayleigh block-fading channels.
Nevertheless, none of these works considered imperfect CSI. 
The queueing performance under rate adaptation with imperfect/outdated CSI but without finite blocklength effects was analyzed by Gross \cite{gross2012scheduling}.

Thus, in this work we address the delay performance of a wireless communication system in the finite blocklength regime with rate adaptation based on imperfect channel estimation. 
In this context we provide three main contributions:
\begin{itemize}
\item 
Based on stochastic network calculus \cite{alzubaidy2016ton}, we characterize the trade-off between the rate and the error probability with respect to the delay performance as a convex optimization problem. Thus, the transmitter can efficiently determine the optimal transmission rate, taking the overall latency and reliability constraints of the application flow into account.
\item This optimization is based on a novel closed-form approximation for the error probability due to the combined effects of finite blocklength coding and imperfect CSI at the transmitter in Rayleigh fading channels. Specifically, we derive an approximation for an information-theoretic result from Yang et al. \cite{yang2014quasi}. A key challenge that we overcame in this analysis is that finite blocklength effects are modeled as variations in the rate, whereas imperfect CSI corresponds to variations in the SNR. Our approximation is invertible, providing a direct mapping from the error probability to the rate, and has potential uses beyond the scope of this paper.
\item We show through numerical analysis that both finite blocklength coding and imperfect CSI have a significant impact on the performance under strict latency and reliability constraints. Our results show that rate adaption, despite needing a fraction of the resources for channel training, significantly outperforms fixed rate transmissions when low latency is required. Moreover, we find that through an optimal rate adaptation strategy and through an optimal choice of training duration versus payload transmission time, the system can improve the overall reliability by one order of magnitude.
\end{itemize}

This paper is organized as follows: The system model is given in Sec.~\ref{sec:system_model}. 
Our main contributions are presented in Sec.~\ref{sec:analysis}.
Numerical results are then presented in Sec.~\ref{sec:numerics}, followed by our conclusions in Sec.~\ref{sec:conclusions}.

Throughout the paper, we utilize the following notation: 
Uppercase italic letters $X$ generally refer to random variables, whereas the corresponding lowercase letters $x$ refer to a realization of that random variable. 
We write $f_{X|y}(x)$ for the probability density function and $F_{X|y}(x)$ for the cumulative distribution function of the variable $X$, conditioned on the value $Y=y$ of a different random variable $Y$. For complex values, $\Re\left\{x\right\}$ describes the real part, $\angle(x)$ describes the phase, and $x^*$ is the complex conjugate. 


\section{System Model}
\label{sec:system_model}

We consider data transmission over a point-to-point wireless link. 
More precisely, a data flow arrives at a transmitter and needs to be transmitted to a receiver.
The incoming data must be transmitted with probabilistic constraints on the quality-of-service: a target deadline can be violated with a probability not exceeding  
a given maximum delay violation probability.
In general, we are interested in data flows as arising in industrial contexts, i.e., with a low constant data rate, periodic arrivals, short deadlines and very low target violation probabilities.
A time-slotted system with equal slots containing $\ntotal$ symbols is considered.
Each time slot is furthermore assumed to be split into two phases: The training/estimation phase, where the transmitter sends a known training sequence of $\nt$ symbols to the receiver; and the actual data transmission phase of $\nd=\ntotal-\nt$ symbols. 
At the end of the training phase, the receiver sends the estimated CSI as feedback to the transmitter. 
Then, based on the feedback and the amount of data backlogged, the transmitter attempts to transmit a certain amount of data during the transmission phase. 
We assume that the feedback is instantaneous and error-free. 
Furthermore, the feedback also includes an acknowledgment of the previous data transmission.
Fig.~\ref{fig:quasistatic_fading} illustrates the basic system operation: since the transmitter knows only a noisy estimate of the channel capacity, it often selects a rate below the estimated capacity to decrease the chance of transmission errors, which can occur due to both imperfect channel knowledge and finite blocklength effects. 
\begin{figure}[t]
	\centering
	\def\svgwidth{\figurewidth}
	\scriptsize{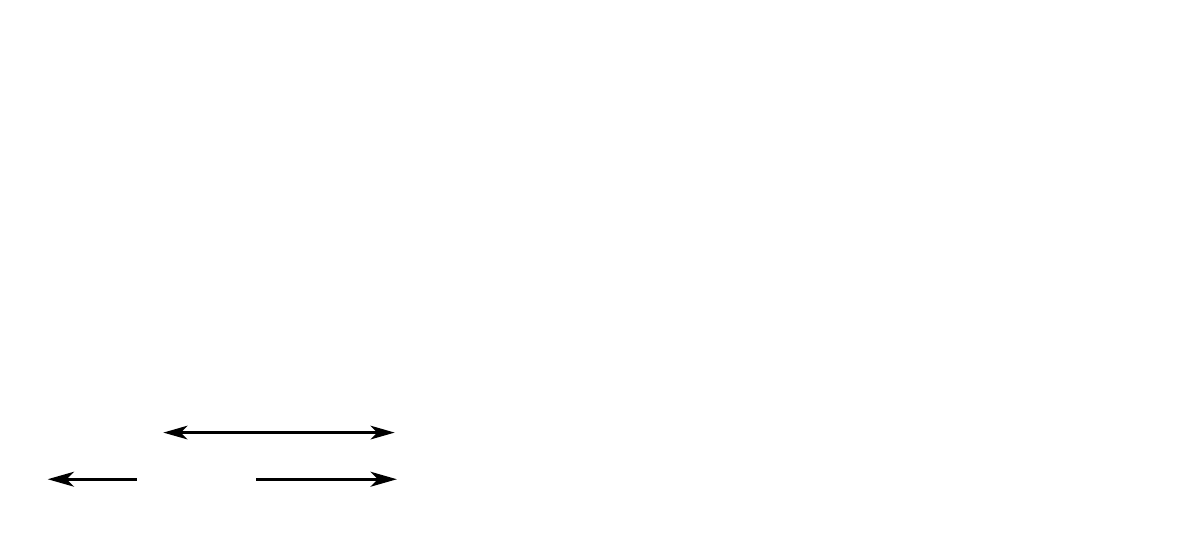}
	\caption{Wireless transmission for the first three time slots. After the training phase of $\nt$ symbols, the transmitter receives an estimate $\snrmmse$ of the channel's SNR as feedback and transmits a codeword at a rate $r(\snrmmse)$. Rate adaptation is challenging when the transmitter has only an imperfect estimate $\log_2(1+\snrmmse)$ of the channel capacity.}
	\label{fig:quasistatic_fading}
\end{figure}

Transmitter and receiver are both assumed to be static.
A frequency-flat Rayleigh block-fading channel model is assumed, where the channel remains constant for the duration of one time slot and changes independently between time slots.
This model applies for example to systems that apply frequency hopping after each time slot.
We assume that there is one transmit and one receive antenna, so the channel can be described by the scalar complex fading coefficient $H$. 
For the Rayleigh block-fading channel, $H$ has circularly symmetric Gaussian distribution $\mathcal{CN}(0,1)$. 
In each time slot, the instantaneous signal-to-noise ratio (SNR) at the receiver is given as $\Snr=\snravg|H|^2$ and has exponential distribution with mean $\snravg$.
The average SNR $\snravg$ is assumed to be constant and known at the transmitter and the receiver. 
In the following, we first provide more details on the two different system operation phases, before we relate these phases to the queuing performance and the problem statement.

\subsection{Training Phase}
\label{ssec:system_training}
The receiver estimates the fading coefficient $H$ through a known training sequence of $\nt$ symbols.
The minimum mean square error (MMSE) estimate for $H$ is given as \cite{hassibi2003much,caire2010multiuser}:
\begin{align}
\Hmmse &= \frac{\snravg \nt}{1+\snravg \nt} H + \frac{\sqrt{\snravg \nt}}{1+\snravg \nt}\noisecomb \;,
\label{eq:hmmse_def}
\end{align}
where $\noisecomb\sim\mathcal{CN}(0,1)$ is independent of $H$. 
Therefore, the channel estimate $\Hmmse$ is distributed as $\Hmmse\sim\mathcal{CN}\left(0,\rho^2\right)$ with $\rho^2=\snravg \nt/(1+\snravg \nt)$, and the estimated SNR $\Snrmmse\defined\snravg|\Hmmse|^2$ follows an exponential distribution with mean $\rho^2 \snravg$.
Due to $H$ and $\Hmmse$ being jointly Gaussian, $H$ can be expressed in terms of the estimate $\Hmmse$ as \cite{caire2010multiuser}
\begin{align}
H &= \Hmmse + \noisealt \;,
\label{eq:H_redefinition1}
\end{align}
where the estimation noise $\noisealt\sim\mathcal{CN}(0,\sigmanoise^2)$ is independent of $\Hmmse$ with
\begin{align}
\sigmanoise^2 = \frac{1}{1+\snravg\nt} \;.
\end{align}

\subsection{Data Transmission Phase}
\label{ssec:data_transmission}
After the training phase, a noise-free feedback provides the transmitter with the estimated coefficient $\Hmmse$. 
In the rest of this section we will consider a single time slot, in which the transmitter has a specific estimate $\hmmse$ of the fading coefficient.  
In this scenario, the phase of the channel coefficient is not used at the transmitter, so it is not relevant whether the transmitter knows $\hmmse$ or the estimated SNR $\snrmmse=\snravg|\hmmse|^2$.
We assume that the transmitter constantly operates at the maximum allowed transmit power, i.e., it does not perform power allocation but uses the channel estimate only to select a code rate $r$ and encode $\nd \cdot r$ data bits into a codeword of $\nd$ symbols.
However, the actual SNR $\Snr$ is unknown and can be lower than the estimated SNR $\snrmmse$, and thus the transmitter may select a code rate that is higher than the channel capacity $C=\log_2(1+\Snr)$. In this case, the channel is said to be in outage. 
When the blocklength of the channel code becomes very large (infinite), the outage probability becomes equal to the probability that the received signal cannot be correctly decoded. 
Conditioned on the estimate $\snrmmse$, the outage probability is given as
\begin{equation}
\pout = \Prob{\log_2(1+\Snr) < \rate \left| \Snrmmse = \snrmmse\right.} \;.
\label{eq:p_outage}
\end{equation} 
To compute the outage probability, first note that the distribution of $H$ conditioned on a known $\hmmse$ is 
$H = \hmmse + \noisealt$,
according to (\ref{eq:H_redefinition1}). 
Thus, when conditioned on the estimate, the channel coefficient $H$ has the same distribution as the fading coefficient of a Rician channel with line-of-sight (LoS) component $\hmmse$ (which is known at transmitter and receiver) and unknown non-LoS component $\noisealt$. 
The SNR $\Snr=\snravg|H|^2$ conditioned on the estimate $\snrmmse$ then follows a non-central $\chi^2$-distribution with two degrees of freedom and PDF:
\begin{equation}
f_{\Snr|\snrmmse}(x) = \frac{1}{\snravg \cdot \sigmanoise^2} \cdot e^{-\frac{x+\snrmmse}{\snravg \cdot \sigmanoise^2}} \cdot I_0\left( \frac{2\sqrt{x\snrmmse}}{\snravg \cdot \sigmanoise^2}\right) \;.
\label{eq:dist_snr_noncentral}
\end{equation}
The conditional outage probability can be found through the cumulative distribution, which is given by the Marcum Q-function $Q_1(a,b)$ \cite{nuttall1975some,gross2012scheduling}:
\begin{align}
 \CDF_{\Snr|\snrmmse}(x) 
 &= 1 - Q_1\left(\sqrt{\frac{2\snrmmse}{\snravg\sigmanoise^2}},\sqrt{\frac{2x}{\snravg \sigmanoise^2}}\,\right) \;.
 \label{eq:dist_snr_ccdf1}
\end{align}

In contrast, when the blocklength of the channel code is rather small (finite), the interpretation of the conditional channel distribution as that of a Rician fading channel with known LoS component $\hmmse$ enables the use of finite blocklength results for quasi-static fading channels from Yang et al. \cite{yang2013quasi,yang2014quasi}. 
According to the normal approximation \cite[(59-61)]{yang2014quasi}, the approximation $\perrornorm$ for the error probability for a code with rate $r$ is given by:
\begin{align}
\perrornorm &\approx \expected{\left.Q\left(\frac{\log_2(1+\Snr) - \rate}{\sqrt{\dispersion(\Snr)/\nd}}\right) \right| \Snrmmse = \snrmmse} \;, 
\label{eq:yang_normalapprox}
\end{align}
where $Q(\cdot)$ is the Gaussian Q-function and the channel dispersion is given as:
\begin{equation}
\dispersion(\snr)=\log_2^2(e) \left(1-\frac{1}{(1+\snr)^2}\right) \;.
\label{eq:dispersion_polyanskiy}
\end{equation}
The expectation is taken over the conditional distribution (\ref{eq:dist_snr_noncentral}) of the SNR $\Snr$ given the estimated SNR $\snrmmse$. 
We finally state two remarks concerning (\ref{eq:yang_normalapprox}): 

(I) The authors in \cite{yang2014quasi} assumed perfect CSI at the receiver (CSIR) when proposing (\ref{eq:yang_normalapprox}).
In our scenario, the receiver does not have perfect knowledge of $H$, as it only knows the estimate $\hmmse$ but not the estimation noise $\noisealt$. 
We can verify the accuracy of (\ref{eq:yang_normalapprox}) by numerically computing a lower bound \cite[Cor.~3]{yang2013quasi} on the achievable rate for Rician fading channels that does not depend on perfect CSIR\footnote{The achievable rate in \cite[Cor.~3]{yang2013quasi} is for fading channels with no instantaneous CSI, which means the realization $\noisealt$ is unknown at both transmitter and receiver, but the statistics (i.e., the mean $\hmmse$) are known. This corresponds exactly to our model, and can be computed with the SPECTRE toolbox \cite{polyanskiy2016spectre}.}. 
These results indicate that, in our scenario with a significant amount of training and rate adaptation at the transmitter, (\ref{eq:yang_normalapprox}) provides a reasonable approximation for the achievable performance.
A second and independent argument to support the accuracy of (\ref{eq:yang_normalapprox}) is that after the training and feedback phase, many wireless communication systems send pilot symbols in addition to the codeword symbols in the data transmission phase, which does not change the CSI at the transmitter but significantly improves the CSI at the receiver, such that imperfect CSI at the receiver will not affect the decoding performance. 

(II) The finite blocklength result is an approximation and not accurate for all parameters. 
For the non-fading AWGN channel \cite{polyanskiy2010channel}, the approximation is considered to be accurate when the blocklength $n$ is above 200. 
Furthermore, it may become less accurate at extremely low error probabilities, e.g., $\perrornorm<10^{-4}$. 
In this work, we will avoid very short blocklengths and very low error probabilities, as ultra high reliability with respect to a certain target deadline is achieved through retransmissions. 
Thus, we will assume that the approximation of $\perrornorm$ in (\ref{eq:yang_normalapprox}) is exactly equal to the error probability.

\subsection{Queueing Model}
In the wireless channel model, transmissions fail with probability $\perrornorm$ due to imperfections in the channel state estimation and due to finite blocklength effects. 
Furthermore, as the data rate $\rate$ is adapted to the time-varying channel, it varies from time slot to time slot and may become very small.
Thus, not all the data arriving in a certain time slot can be transmitted successfully. 
To avoid data loss, data must be stored in a buffer or queue, in which it will remain for a random time until the receiver sends an acknowledgment indicating that the data was successfully decoded. 
As the arrival rate of the considered data flow is rather small, the buffer is assumed to be large enough to hold all incoming data, such that all data will eventually be transmitted.

One performance metric for such a transmission scheme is the expected goodput, which is the expected value of the amount of data that can be successfully transmitted. 
In each time slot, $r \cdot\nd$ bits can be transmitted, but transmissions fail with probability $\perrornorm$. 
We define the normalized expected goodput (normalized with respect to the total number of symbols $\ntotal=n+m$) as:
\begin{equation}
 \expgoodput \defined \frac{\nd}{\ntotal} \expected{\rate(\Snrmmse)\cdot\left(1-\perrornorm\right)} \;,
 \label{eq:expected_goodput1}
\end{equation}
where the expectation is taken over the distribution of the estimated SNR $\Snrmmse$, which follows an exponential distribution according to (\ref{eq:hmmse_def}). We specifically denote the rate as $r(\Snrmmse)$ as it is chosen according to the estimated SNR $\Snrmmse$.
	
However, for the considered data flow (originating from an industrial application), the expected goodput is not a suitable performance metric as it cannot characterize the tail of the random delay distribution of the system.
Thus, in order to analyze the random delay of data in the queue, the system is described in terms of its arrival, service and departure processes. 
In time slot $i$, the arrival process $\Ai$ is given by the data that is generated at the transmitter side, e.g., from sensor readings, and enters the queue. 
The service process $\Si$ describes the number of bits that can potentially be transmitted over the wireless channel in time slot $i$. 
In case of transmission errors, i.e., with probability $\perrornorm_i$, the service $\Si$ is zero: 
\begin{align}
	\Si=\left\{\begin{array}{ll}
		n\cdot r(\Snrmmse_i) &\quad \text{with prob.} \quad (1-\perrornorm_i) \\
		0 &\quad \text{with prob.} \quad \perrornorm_i
	\end{array}\right.
	\label{eq:def_service}
	\; ,
\end{align}
where the estimated SNR $\Snrmmse_i$, as well as the corresponding $\rate(\Snrmmse_i)$ and $\perrornorm_i$ vary from one time slot to another.
The departure process $\Di$ is given as the number of bits leaving the queue, which is equal to the number of bits that reach the receiver successfully. The departure process is upper-bounded both by the service process and by the amount of data waiting in the queue. 
For the analysis of queueing systems, we also define the cumulative arrival, service, and departure processes
\begin{equation}
	\Abitcum(\ta,\tb) \defined \sum\limits_{i=\ta}^{\tb-1}\Ai \;,
	\quad\Sbitcum(\ta,\tb) \defined \sum\limits_{i=\ta}^{\tb-1}\Si \;,
	\quad\Dbitcum(\ta,\tb) \defined \sum\limits_{i=\ta}^{\tb-1}\Di \;.
\end{equation}

The random delay $\Delay(t)$ at time $t$ is then defined as the time it takes before all data that arrived before time $t$ has actually departed from the queue, i.e. reached the receiver:
\begin{equation}
	\Delay(t) \defined \inf\left\{u\geq 0:\quad \Abitcum(0,\tb) \leq \Dbitcum(0,\tb+u) \right\} \;.
	\label{eq:def_delay}
\end{equation}
For the considered application, we are interested in conveying the data with respect to a certain deadline.
This translates into the delay violation probability as a performance metric, i.e. the maximum probability at any time $t$ that the random delay $\Delay(t)$ exceeds a specified target delay $w$ (i.e., deadline):
\begin{equation}
	\pv(w) \defined \sup_{t\ge 0} \left\{ \Prob{\Delay(t)>w} \right\}\; .
	\label{eq:def_pdelayviol_time}
\end{equation}

\subsection{Problem Statement}
The main objective of this work is a characterization of the trade-off between the time spent on channel training and time spent on actual data transmission. 
When using a long training sequence of $\nt$ symbols, the channel estimates become more accurate, allowing transmissions with higher reliability and hence fewer retransmission attempts.
However, a long training sequence reduces the number of remaining symbols for data transmission $\nd=\ntotal-\nt$.
We thus face a typical trade-off between increasing the transmission reliability and the duration for payload transmissions.
This trade-off becomes in particular interesting when shorter transmission slots are considered, which is the case in low latency wireless networks for critical industrial applications: Due to finite blocklength effects, the shortening of the payload transmission deteriorates the communication performance even more rapidly.
What is then the optimal trade-off between $\nt$ and $\nd$? 
How is this optimal trade-off related to the required delay violation probability $\pv(w)$ of the system, i.e. the criticality of the conveyed application data? 
Does it even make sense to use training in all scenarios, or is it sometimes better to skip training ($\nt=0$) and use the entire time slot for data transmissions at fixed rate?

The same kind of trade-off can be observed between the code rate $\rate$ and the corresponding error probability $\perrornorm$: Lower data rates lead to lower error probability $\perrornorm$ but also to reduced data throughput. 
Therefore, our secondary objective is to determine the optimal trade-off with respect to the delay performance between code rate $\rate$ and error probability $\perrornorm$. 


\section{Analysis}
\label{sec:analysis}
In this section, we present our main contribution:
based on stochastic network calculus \cite{alzubaidy2016ton}, we formulate the optimal trade-off with respect to the delay violation probability $\pv(w)$ between the code rate $r$ and the error probability $\perrornorm$ as a convex optimization problem.
Thus, depending on the requirements of the application, an optimal rate adaptation strategy for the transmitter can be quickly determined.
The biggest challenge in the characterization of that trade-off is the fact that no closed-form representation of the problem exists in the literature, as $\perrornorm$ is not given in closed form.
Thus, after summarizing stochastic network calculus and presenting the problem in Sec.~\ref{ssec:queueing}, we derive in Sec.~\ref{ssec:impcsi} and \ref{ssec:finite_blocklength}  a novel closed-form approximation of the error probability $\perrornorm$.
This analysis initially considers only imperfect CSI at the transmitter and yields an approximation and upper bound for the outage probability $\pout$, while ignoring the effects of finite length coding. 
The approximation is extended to channel codes with finite blocklength in Sec.~\ref{ssec:finite_blocklength}. 
This allows us then in Sec.~\ref{ssec:optimal_ra} to address the optimal trade-off between the rate $\rate$ and error probability $\perrornorm$ as convex problem based on the previously developed approximations. 

In addition, we show in Sec.~\ref{ssec:further_uses} how our approximation could be applied to scenarios different from the one considered in this work. 


\subsection{Queueing Analysis}
\label{ssec:queueing}
This subsection summarizes previous results on how the delay performance, specifically the delay violation probability $\pv(w)$ given in (\ref{eq:def_pdelayviol_time}), can be analyzed through stochastic network calculus \cite{fidler2006network,alzubaidy2016ton}.
While the delay violation probability can also be analyzed using effective capacity \cite{wu2003effective}, which has been successfully applied in numerous works, e.g., \cite{gursoy2013throughput,ozmen2013throughput}, effective capacity only provides an approximation of $\pv(w)$ in the tail of the delay distribution, i.e., for relatively large delays. 
Contrary to that, stochastic network calculus \cite{fidler2006network}, which has been recently extended to wireless network analysis in a transform domain \cite{alzubaidy2016ton} and which has also been applied in various scenarios \cite{petreska2015recursive,schiessl2015delay,schiessl2016interference,al2015performance}, provides a strict upper bound on the delay violation probability $\pv(w)$, even at low delay.
This is beneficial for ultra-reliable low latency systems in an industrial context: The modeled system will violate certain delay bounds with an even lower probability than shown by the analysis.
Parts of the following summary are taken from our previous work in \cite{schiessl2015delay}. 

The delay $W(t)$ in (\ref{eq:def_delay}) is defined in terms of the arrival and departure processes. 
However, for finding the statistical distribution of the delay, it is easier to use only the arrival and service processes.
The authors in \cite{alzubaidy2016ton} characterized these processes in the exponential domain, also referred to as \emph{SNR domain}. The main benefit of this approach is the elimination of the logarithm in the channel capacity. Instead of describing the cumulative service and arrival $\Sbitcum(\ta,\tb)$ and $\Abitcum(\ta,\tb)$ in the bit domain, they are converted to the SNR domain (denoted by calligraphic letters) as follows:
\begin{equation}
\Asnrcum(\ta,\tb) \defined \base^{\Abitcum(\ta,\tb)},\quad \Ssnrcum(\ta,\tb) \defined \base^{\Sbitcum(\ta,\tb)}
\;.
\label{eq:Asnr_Ssnr}
\end{equation}
Similarly, we define $\Asnr_i \defined e^{\Abit_i}$ and $\Ssnr_i \defined e^{\Sbit_i}$. According to the system model, the service process $\Sbit_i$ is independent and identically distributed (i.i.d.) between time slots. We also require the arrival process $\Abit_i$ to be i.i.d. in this work. 

An upper bound on the delay violation probability $\pv(w)$ can then be computed in terms of the Mellin transforms of $\Asnr_i$ and $\Ssnr_i$, where we can drop the subscript $i$ due to the i.i.d. assumption.
The Mellin transform $\Mellin_\Xsnr(\s)$ of a nonnegative random variable $\Xsnr$ is defined as \cite{alzubaidy2016ton}
\begin{equation}
\Mellin_\Xsnr(\s)\defined\mathbb{E}\left[\Xsnr^{\s-1}\right]
\end{equation}
for a parameter $\s \in \mathbb{R}$. 
For the analysis, we always choose $\s>0$ and first check whether the stability condition $\Mellin_{\Asnr}(1+\s)\Mellin_{\Ssnr}(1-\s) < 1$ holds.
If it holds, define the kernel \cite{alzubaidy2016ton,schiessl2015delay}
\begin{align}
\Mfun{\s,w} &\defined \lim\limits_{t\to\infty} \sum_{u=0}^{t} \Mellin_{\Asnr}(1+\s)^{t-u} \cdot \Mellin_{\Ssnr}(1-\s)^{t+w-u}\\
&= \frac{\Mellin_{\Ssnr}(1-\s)^{w}}{1-\Mellin_{\Asnr}(1+\s)\Mellin_{\Ssnr}(1-\s)} 
\;.
\label{eq:snc_kernel}
\end{align}
This kernel is an upper bound for the delay violation probability, which holds for any time slot $t$, including the limit $t\to\infty$ (steady-state):
\begin{equation}
\pv(w) \leq \inf_{\s>0}\left\{ \Mfun{\s,w} \right\} \;.
\label{eq:pdelay_violation}
\end{equation}
For any parameter $\s>0$, the kernel $\Mfun{\s,w}$ provides an upper bound on the probability $\pv(w)$ that the delay exceeds the target delay $w$. 
In order to find the tightest upper bound, one must find the parameter $\s>0$ that minimizes $\Mfun{\s,w}$.

The kernel $\Mfun{\s,w}$ depends on the Mellin transforms of $\Asnr$ and $\Ssnr$, i.e., of the arrival and service processes in the SNR domain. 
For simplicity, we assume that the arrival process is constant, i.e., in each time slot of length $\ntotal$ symbols, a data packet with a constant size of $\arrpersymb\cdot\ntotal$ bits arrives at the transmitter, thus $\Mellin_\Asnr(\s)=e^{\arrpersymb\ntotal(\s-1)}$. 
The service process describes the number of bits that are \emph{successfully} transmitted to the receiver. 
The random service $\Sbit$ in (\ref{eq:def_service}) can be described as $\Sbit=\nd\rate(\Snrmmse) \cdot Z$, where $\rate(\Snrmmse)$ is the code rate adapted to the measured SNR $\Snrmmse$ and $Z$ is a Bernoulli random variable, which is zero in case of error, i.e., with probability $\perrornorm$, and one in case of successful transmission, i.e. with probability $(1-\perrornorm)$. 
Thus, the Mellin transform $\Mellin_{\Ssnr}(\s)$ of the service process in the SNR domain $\Ssnr=e^\Sbit$ is given as \cite{schiessl2015delay}
\begin{align}
\Mellin_\Ssnr(\s) &= \expected{\Ssnr^{\s-1}}=\expectedwrt{\Snrmmse,Z}{e^{\nd\rate(\Snrmmse)\cdot Z\cdot(\s-1)}}
\label{eq:mellin_service}
\\
&= \int\limits_{0}^{\infty}\left((1-\perrornorm)e^{
	\nd\rate(\snrmmse)(\s-1)
} + \perrornorm\right)f_\Snrmmse(\snrmmse)d\snrmmse
\;.
\label{eq:mellin_service_comb1}
\end{align}
Note that, in general, the error probability $\perrornorm$ is not constant but varies based on the estimated SNR $\snrmmse$ and on the selected code rate $r(\snrmmse)$.
When $\Mellin_{\Asnr}(\s)$ and $\Mellin_{\Ssnr}(\s)$ are known\footnote{In order to evaluate $\Mellin_\Ssnr(\s)$ numerically, the integration range of (\ref{eq:mellin_service_comb1}) can be split into a finite set of intervals. As both the error probability and $f_\Snrmmse(\snrmmse)$ decrease in $\snrmmse$, replacing $\snrmmse$ in each interval with the lower limit of that interval will result in an upper bound on $\Mellin_\Ssnr(\s)$. The delay bound (\ref{eq:pdelay_violation}) remains valid when $\Mellin_{\Ssnr}(\s)$ is replaced by its upper bound.}, the kernel $\Mfun{\s,w}$ and the upper bound (\ref{eq:pdelay_violation}) on the delay violation probability $\pv(w)$ can be computed.

In this work, we seek to characterize the optimal trade-off with respect to the delay violation probability $\pv(w)$ between training length $\nt$ and code length $\nd$, as well as between the selected rate $r(\snrmmse)$ and the resulting error probability $\perrornorm$. 
The analytical bound (\ref{eq:pdelay_violation}) on $\pv(w)$ can be used to solve this problem. 
First, in order to obtain the delay bound (\ref{eq:pdelay_violation}), one must iterate over different parameters $\s>0$, and compute the kernel $\Mfun{\s,w}$ for each $\s$. 
The kernel $\Mfun{\s,w}$ is monotonically increasing in $\Mellin_\Ssnr(1-\s)$. 
Therefore, for each particular value of $\s>0$, finding the parameters $\nt$ and $r(\snrmmse)$ (with the corresponding values of $\nd=\ntotal-\nt$ and $\perrornorm$) that minimize $\Mellin_\Ssnr(1-\s)$ in (\ref{eq:mellin_service_comb1}) yields the desired minimum on the delay bound.

Concerning the optimal training length $\nt$, one can simply iterate over all possible choices of $\nt$ and pick the value that provides the lowest value of $\Mellin_\Ssnr(1-\s)$.
However, finding the optimal rates $r(\snrmmse)$ that minimize (\ref{eq:mellin_service_comb1}) is hard, because the error probability $\perrornorm$ in (\ref{eq:yang_normalapprox}) can only be evaluated numerically and depends on the training length $\nt$, on the estimated SNR $\snrmmse$, and on the rate $\rate(\snrmmse)$. 
In order to solve this problem, we develop a closed-form approximation for $\perrornorm$ in the following two sections. 
We then show in Sec.~\ref{ssec:optimal_ra} that with this closed-form approximation, finding the rates $r(\snrmmse)$ that minimize \eqref{eq:mellin_service_comb1} becomes a convex optimization problem.
	
\subsection{Outage Probability Approximation for Imperfect CSI}
\label{ssec:impcsi}
When the blocklength of the code tends to infinity, i.e., when the effects of channel coding at finite blocklength are ignored, then the error probability $\perrornorm$ in (\ref{eq:yang_normalapprox}) converges to the outage probability $\pout$ in (\ref{eq:p_outage}). Conditioned on the channel estimate, the outage probability can be computed using (\ref{eq:dist_snr_ccdf1}), i.e., in terms of the Marcum Q-function.
In this section, we provide an upper bound for the outage probability based on the Gaussian Q-function.

\begin{lemma}{
\label{lemma_impcsi}
Given an imperfect estimate of the channel $\snrmmse$ and a rate $\rate$, the outage probability is bounded by
\begin{align}
\pout &\leq Q\left(\frac{\snrmmse-(2^\rate-1)}{\sigmaa}\right) \;,
\label{eq:outage_probability}
\end{align}
with $\sigmaasq\defined 2\sigmanoise^2\snravg\snrmmse$.
}
\end{lemma}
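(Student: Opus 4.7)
The plan is to replace the exact non-central $\chi^2$ distribution of $\Snr$ given $\snrmmse$ by a Gaussian lower bound on $\Snr$, and then evaluate the resulting tail probability in closed form.

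First I would decompose the true channel using \eqref{eq:H_redefinition1} as $H = \hmmse + \noisealt$ with $\noisealt\sim\mathcal{CN}(0,\sigmanoise^2)$ independent of $\hmmse$, and expand
\begin{equation*}
\Snr \;=\; \snravg|H|^2 \;=\; \snravg|\hmmse|^2 + 2\snravg\,\Re\{\hmmse^{\ast}\noisealt\} + \snravg|\noisealt|^2 \;=\; \snrmmse + G + \snravg|\noisealt|^2,
\end{equation*}
where $G \defined 2\snravg\,\Re\{\hmmse^{\ast}\noisealt\}$. Since $\noisealt$ is circularly symmetric complex Gaussian, conditioning on $\hmmse$ makes $G$ a real Gaussian with mean $0$ and variance that I would compute from $\mathrm{Var}(\Re\{\hmmse^{\ast}\noisealt\}) = |\hmmse|^2\sigmanoise^2/2$; multiplying by $(2\snravg)^2$ yields $\mathrm{Var}(G) = 2\snravg^2|\hmmse|^2\sigmanoise^2 = 2\snravg\sigmanoise^2\snrmmse = \sigmaasq$, matching the stated definition of $\sigmaasq$.

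Next I would use the fact that $\snravg|\noisealt|^2\geq 0$ almost surely to drop that term, giving the pointwise lower bound $\Snr \geq \snrmmse + G$. The outage event $\log_2(1+\Snr) < r$ is equivalent to $\Snr < 2^r - 1$, so monotonicity of probability gives
\begin{equation*}
\pout \;=\; \Prob{\Snr < 2^r - 1 \,\big|\, \snrmmse} \;\leq\; \Prob{\snrmmse + G < 2^r - 1 \,\big|\, \snrmmse} \;=\; \Prob{G < -(\snrmmse-(2^r-1)) \,\big|\, \snrmmse}.
\end{equation*}
Since $G \sim \mathcal{N}(0,\sigmaasq)$ conditioned on $\snrmmse$, this last probability equals $Q\!\left(\tfrac{\snrmmse - (2^r-1)}{\sigmaa}\right)$ by the standard identity $\Prob{G<-x}=Q(x/\sigmaa)$, which proves the lemma.

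There is no real obstacle; the only item requiring care is the variance calculation for $G$, in particular keeping track of the factor $1/2$ from taking the real part of a circularly symmetric complex Gaussian and the compensating factor $(2\snravg)^2$ from the scaling. The bound is one-sided and tight when $\snravg|\noisealt|^2$ is negligible compared with $|G|$, i.e., in the regime $\snrmmse \gg \snravg\sigmanoise^2$ where the $\chi^2$ term is dominated by its Gaussian ``linearization'' around $\hmmse$.
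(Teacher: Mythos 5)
Your proposal is correct and follows essentially the same route as the paper: expand $\Snr=\snravg|\hmmse+\noisealt|^2$, identify the cross term as a zero-mean Gaussian with variance $\sigmaasq=2\sigmanoise^2\snravg\snrmmse$, drop the non-negative quadratic term $\snravg|\noisealt|^2$, and evaluate the resulting Gaussian tail. The only cosmetic difference is that the paper makes the variance computation explicit via a phase rotation $\noisealtrot=e^{-j\angle(\hmmse)}\noisealt$, whereas you compute $\mathrm{Var}(\Re\{\hmmse^*\noisealt\})$ directly; both are equivalent.
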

\begin{proof}
Given a known measurement $\hmmse$, the random variable $H$ is given in terms of the random variable $\noisealt$ according to (\ref{eq:H_redefinition1}). Thus:
\begin{align}
\Snr &= \snravg|H|^2 = \snravg(\hmmse + \noisealt)(\hmmse^* + \noisealt^*)\\
&=\snrmmse + 2\snravg\Re\left\{\hmmse^*\noisealt\right\}+ \snravg\magsq{\noisealt} \\
&=\snrmmse  + 2\snravg|\hmmse|\Re\left\{\noisealtrot\right\}+ \snravg\magsq{\noisealtrot}
\end{align}
where $\noisealtrot=e^{-j\angle(\hmmse)}\noisealt$ is just a phase-rotated version of $\noisealt$. The distribution and the magnitude of a circularly symmetric random variable stay constant under phase rotation, and thus the real part $\Re\left\{\noisealtrot\right\}$ has Gaussian distribution $\mathcal{N}(0,\sigmanoise^2/2)$.
It follows that
the SNR $\Snr=\snravg\magsq{H}$ is given as
\begin{align}
\Snr 
&= \snrmmse + \Snrerrgauss + \Snrerrdelta =\snrmmse + \Snrerr \;,
\label{eq:bound_snr_impcsi}
\end{align}
i.e. the estimation error $\Snrerr = \Snr-\snrmmse$ is the sum of a Gaussian error $\Snrerrgauss \sim\mathcal{N}(0,\sigmaasq)$ and some $\Snrerrdelta=\snravg\magsq{\noisealtrot}$. 
The outage probability $\pout$ can then be bounded as:
\begin{align}
\pout &= \Prob{\Snr < 2^\rate-1 \left| \Snrmmse=\snrmmse\right.} \\
&\leq \Prob{\snrmmse + \Snrerrgauss < 2^\rate-1}\\
&=\Prob{-\frac{\Snrerrgauss}{\sigmaa}>\frac{\snrmmse-(2^\rate-1)}{\sigmaa}}
\end{align}
where the inequality holds because $\Snrerrdelta\geq 0$. 
\end{proof}
We observe that the variance of $\Snrerrdelta$ is proportional to $\sigmanoise^4$ and thus $\Snrerrdelta$ becomes very small relative to $\Snrerrgauss$ as the channel estimates become more accurate. 
In that case, (\ref{eq:outage_probability}) becomes a tight upper bound on the outage probability.

Given a target outage probability of e.g. $\epsout=10^{-3}$, it is difficult to find the exact rate $\rate$ for which the outage probability $\pout$ is exactly $\epsout$, as the Marcum Q-function cannot be easily inverted. 
However, the approximation (\ref{eq:outage_probability}) is invertible:
\begin{corollary}{
		\label{corollary_impcsi_rate}
		Given an imperfect channel estimate $\snrmmse$ and a target outage probability $\epsout$ with $Q(\snrmmse/\sigmaa)<\epsout<1/2$, the actual outage probability $\pout$ is less than or equal $\epsout$ if the transmitter chooses the rate
		\begin{align}
		\rate_\iCSI(\snrmmse,\epsout) = \log_2\left(1+\snrmmse - \sigmaa Q^{-1}(\epsout)\right) \;.
		\label{eq:outage_rate_approx}
		\end{align}
	}
\end{corollary}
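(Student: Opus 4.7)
The plan is to derive the claim by directly inverting the upper bound established in Lemma~\ref{lemma_impcsi}. Since Lemma~\ref{lemma_impcsi} already gives
\begin{equation*}
\pout \leq Q\!\left(\frac{\snrmmse - (2^\rate - 1)}{\sigmaa}\right),
\end{equation*}
it suffices to choose the largest rate $\rate$ for which the right-hand side is at most $\epsout$. Solving the inequality $Q((\snrmmse - (2^\rate - 1))/\sigmaa) \leq \epsout$ for $\rate$ is straightforward because $Q(\cdot)$ is strictly decreasing and invertible on $(0,1)$. This yields $2^\rate - 1 \leq \snrmmse - \sigmaa Q^{-1}(\epsout)$, from which the proposed rate formula (\ref{eq:outage_rate_approx}) arises by taking equality and applying $\log_2(1+\cdot)$ to both sides.

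The only nontrivial part is verifying that the conditions $Q(\snrmmse/\sigmaa) < \epsout < 1/2$ ensure that the rate is well-defined and strictly positive. First, $\epsout < 1/2$ gives $Q^{-1}(\epsout) > 0$, so the quantity $\sigmaa Q^{-1}(\epsout)$ being subtracted is positive. Second, $\epsout > Q(\snrmmse/\sigmaa)$ combined with the monotonicity of $Q^{-1}(\cdot)$ gives $Q^{-1}(\epsout) < \snrmmse/\sigmaa$, hence $\sigmaa Q^{-1}(\epsout) < \snrmmse$. This guarantees $1 + \snrmmse - \sigmaa Q^{-1}(\epsout) > 1$, so the argument of the logarithm exceeds one and the resulting rate is strictly positive.

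Finally, I would conclude by substituting the closed-form rate back into Lemma~\ref{lemma_impcsi}: with $2^{\rate_\iCSI(\snrmmse,\epsout)} - 1 = \snrmmse - \sigmaa Q^{-1}(\epsout)$, the argument of the $Q$-function in the Lemma becomes exactly $Q^{-1}(\epsout)$, so the bound evaluates to $\epsout$ and the claim $\pout \leq \epsout$ follows. No hard step is anticipated here; the corollary is essentially a packaging of Lemma~\ref{lemma_impcsi} into an invertible form suitable for rate adaptation, and the only subtlety is the bookkeeping of the domain conditions on $\epsout$, which will also motivate the lower cut-off (below which the requested reliability is infeasible even at zero rate) used later when solving the convex optimization in Sec.~\ref{ssec:optimal_ra}.
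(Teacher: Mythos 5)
Your proposal is correct and follows exactly the paper's own argument: set the right-hand side of the bound in Lemma~\ref{lemma_impcsi} equal to $\epsout$, solve for $\rate$, and note that $\epsout > Q(\snrmmse/\sigmaa)$ guarantees positivity of the rate. The extra bookkeeping you supply (monotonicity of $Q^{-1}$ and the back-substitution check) is a harmless elaboration of the same one-line proof.
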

\begin{proof}
Let the right-hand side of (\ref{eq:outage_probability}) be equal to $\epsout$ and solve for the rate $\rate$. The condition $\epsout>Q(\snrmmse/\sigmaa)$ ensures that the rate is positive.
\end{proof}
Therefore, when the transmitter selects the rate $\rate_\iCSI(\snrmmse,\epsout)$, the actual outage probability $\pout$ is not exactly equal to the target outage probability $\epsout$, but $\pout$ is smaller than $\epsout$. This upper bound on the outage probability allows for a worst-case performance analysis.

\subsection{Combined Analysis of Imperfect CSI and Finite Blocklength}
\label{ssec:finite_blocklength}
When analyzing the physical layer using the finite blocklength model, we focus first on the case where the channel state information is perfect. 
If, in a specific time slot, the fading coefficient $h$ and the SNR $\snr=\snravg|h|^2$ are perfectly known at the transmitter and receiver, then (\ref{eq:yang_normalapprox}) can be computed easily, as the expected value is taken with respect to the constant $h$. In that case, (\ref{eq:yang_normalapprox}) can be solved for the achievable rate $\rate$ given the error probability $\perrornorm$: 
\begin{equation}
\ratefin(\snr, \nd, \perrorexact) \approx \log_2(1+\snr) - \sqrt{\frac{\dispersion(\snr)}{\nd}}Q^{-1}(\perrorexact) \; .
\label{eq:polyanskiy1}
\end{equation}
This result corresponds to the approximation for the achievable rate in AWGN channels by Polyanskiy et al. \cite[Thm.~54]{polyanskiy2010channel}.
We can obtain the same result from a different interpretation: For a fixed capacity $c=\log_2(1+\snr)$, define the \emph{random blocklength-equivalent capacity} 
\begin{equation}
C_\fblequiv(\snr) \defined \log_2(1+\snr) -  \sqrt{\frac{\dispersion(\snr)}{\nd}} \cdot \Ub
\label{eq:blocklength_equiv_cap}
\end{equation}
with $\Ub \sim\mathcal{N}(0,1)$ and assume that errors occur if and only if an outage occurs, i.e. iff $C_\fblequiv(\snr) < \ratefin(\snr, \nd,\perrorexact)$. This means that $\ratefin$ is interpreted as the outage capacity for a channel with random capacity $C_\fblequiv(\snr)$, which simplifies the comparison and combination of imperfect CSI and finite blocklength effects.
However, while finite blocklength effects are modeled as Gaussian variation $\Ub$ in the capacity, we observed in Sec.~\ref{ssec:impcsi} that channel estimation errors can be approximated by Gaussian variations in the SNR. In order to analyze the combined impact of both effects, we approximate the finite blocklength variations $\Ub$ as variation in the SNR. This is done using the first-order Taylor approximation of $\ln(x)$ around the point $x_0$, which has gradient $\frac{1}{x_0}$. Due to the concavity of the $\ln$-function, this linear approximation is larger than the function itself:
\begin{equation}
\ln(x_0) - \frac{1}{x_0} (x_0a) \geq \ln\left(x_0 - x_0 a\right) \; .
\end{equation}
Due to $\ln(x)$ being continuous and monotonically increasing, this means that for some $\delta\geq0$ and $b=a\log_2(e)$:
\begin{equation}
\log_2(x_0) - b = \log_2\left(x_0 - x_0 \frac{b}{\log_2(e)} + \delta\right) \;.
\end{equation}
By applying this result to (\ref{eq:blocklength_equiv_cap}) around $x_0=1+\snr$, (\ref{eq:blocklength_equiv_cap}) can be rewritten as 
\begin{equation}
C_\fblequiv(\snr) = \log_2
\left(1+\snr - \sigmab(\snr)\cdot\Ub + \deltau
\right)
\label{eq:blocklength_equiv_cap_redefined2}
\end{equation}
with
\begin{equation}
\sigmab(\snr) \defined \frac{1+\snr}{\log_2(e)}\sqrt{\frac{\dispersion(\snr)}{\nd}} \; ,
\end{equation}
 $\Ub \sim\mathcal{N}(0,1)$ and some random $\deltau\geq 0$. Thus, we convert the Gaussian error in the rate to a Gaussian error in the SNR, plus an unknown $\deltau$ which can later be ignored because it is non-negative.

As a next step, imperfect CSI is taken into account. When the transmitter has imperfect channel state information, the error probability is given by $\perrornorm$ defined in (\ref{eq:yang_normalapprox}).
The following lemma allows an easier notation and interpretation of our results.
\begin{lemma}
The approximate error probability $\perrornorm$ by Yang et al. \cite{yang2014quasi} given in (\ref{eq:yang_normalapprox}) is equal to the \emph{blocklength-equivalent outage probability}, i.e.
\begin{equation}
\expected{\left.Q\left(\frac{\log_2(1+\Snr) - \rate}{\sqrt{\dispersion(\Snr)/\nd}}\right) \right| \Snrmmse = \snrmmse} = \Prob{C_b(\Snr)<r \left| \Snrmmse = \snrmmse\right.} \;,
\end{equation}
with $C_\fblequiv$ defined by (\ref{eq:blocklength_equiv_cap}) now depending on the random SNR $\Snr$ (conditioned on $\snrmmse$) and on the random variable $\Ub$.
\end{lemma}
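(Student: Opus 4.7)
The plan is to prove this identity by applying the tower property of conditional expectation, conditioning on the (random) true SNR $\Snr$ given the estimate $\Snrmmse = \snrmmse$, and then recognizing the resulting conditional probability as a Gaussian Q-function. The independence of the auxiliary standard Gaussian $\Ub$ from both $\Snr$ and $\Snrmmse$ (which is built into the definition of $C_\fblequiv$ in \eqref{eq:blocklength_equiv_cap}) is the feature that makes this collapse work.

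Concretely, I would start from the right-hand side and write
\begin{align}
\Prob{C_\fblequiv(\Snr)<r \left| \Snrmmse = \snrmmse\right.}
&= \Prob{\log_2(1+\Snr) - \sqrt{\dispersion(\Snr)/\nd}\,\Ub < r \left| \Snrmmse = \snrmmse\right.} \notag \\
&= \expected{\left. \Prob{\Ub > \tfrac{\log_2(1+\Snr) - r}{\sqrt{\dispersion(\Snr)/\nd}} \left| \Snr,\Snrmmse\right.} \right| \Snrmmse = \snrmmse} \, ,
\end{align}
where the second equality is the tower property, conditioning on $\Snr$ (given $\Snrmmse=\snrmmse$). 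Because $\Ub$ is standard normal and independent of $(\Snr,\Snrmmse)$, the inner conditional probability evaluates to $Q\!\left(\tfrac{\log_2(1+\Snr)-r}{\sqrt{\dispersion(\Snr)/\nd}}\right)$, yielding exactly the left-hand side in \eqref{eq:yang_normalapprox}.

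I do not expect any real obstacle here; the lemma is essentially a bookkeeping statement showing that the normal approximation from \cite{yang2014quasi} admits the convenient interpretation as an outage probability for the blocklength-equivalent capacity $C_\fblequiv$. The only subtlety worth noting explicitly is to justify the step where the dispersion term $\sqrt{\dispersion(\Snr)/\nd}$ is strictly positive for $\Snr>0$, so that division inside the Q-function is well-defined almost surely under the conditional law \eqref{eq:dist_snr_noncentral}; the event $\Snr=0$ has probability zero there, so this causes no issue and can be dispatched in a single sentence.
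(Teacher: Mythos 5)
Your proposal is correct and is essentially the paper's own argument run in the opposite direction: the paper fixes $\Snr=\snr$, identifies the inner $Q$-function with $\Prob{C_\fblequiv(\snr)<r}$ via the standard-normal $\Ub$, and then takes the conditional expectation over $\Snr$, which is exactly your tower-property step read from left to right. The added remark about the dispersion being almost surely positive is a harmless refinement and does not change the approach.
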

\begin{proof}
For a fixed SNR $\Snr=\snr$, the $Q$-function on the inside of the expectation in (\ref{eq:yang_normalapprox}) is equal to $\Prob{C_\fblequiv(\snr) < r}$ by definition of $C_\fblequiv(\snr)$. The claim follows by taking the expectation over the distribution of $\Snr$ (conditioned on the measurement $\snrmmse$) on both sides.
\end{proof}

When the SNR is not perfectly known at the transmitter, the fixed value $\snr$ needs to be replaced by the random $\Snr=\snrmmse+\Snrerr$. We have seen before that the estimation error $\Snrerr$ in the SNR $\Snr$ can also be approximated as a Gaussian error: $\Snrerr=\Snrerrgauss+\Snrerrdelta$ with $\Snrerrdelta\geq 0$. Thus, (\ref{eq:blocklength_equiv_cap_redefined2}) becomes
\begin{align}
C_\fblequiv(\Snr) &= \log_2\left(
1+\snrmmse + \Snrerr - \sigmab(\snrmmse+\Snrerr)\Ub + \deltau \right)
\\&\geq \log_2\left(
1+\snrmmse + \Snrerrgauss - \sigmab(\snrmmse+\Snrerr)\Ub \right) \; .
\label{eq:blocklength_equiv_cap_redefined3}
\end{align}
When defining the right side of (\ref{eq:blocklength_equiv_cap_redefined3}) as $C_\fblequivlower(\Snr)$, the error probability $\perrornorm$ can be bounded as
\begin{align}
\perrornorm\leq \Prob{C_\fblequivlower(\Snr) < r\left| \Snrmmse = \snrmmse\right.} \;.
\label{eq:errbound1}
\end{align}
Naturally, the channel measurement error $\Snrerr$ and its Gaussian approximation $\Snrerrgauss$ do not depend on the noise in the data transmission phase. In addition, we assumed that the decoding performance is not affected by imperfect CSI at the receiver. Therefore, $\Snrerrgauss$ and $\Ub$ are considered to be independent variables. 
To simplify the analysis, we make the following assumption:
\begin{assumption} 
\label{assumption:sigmafbl} Inequality
(\ref{eq:errbound1}) holds when $\sigmab(\snrmmse+\Snrerr)$ is replaced by $\sigmab(\snrmmse)$, i.e. 
	\begin{equation}
	\perrornorm \leq \Prob{\log_2\left(
	1+\snrmmse+\Snrerrgauss - \sigmab(\snrmmse)\Ub \right) < r}
	\label{eq:blocklength_equiv_cap_redefined4}
	\end{equation}
	is assumed to hold for all parameters.
\end{assumption}
Motivation: 
When the estimated SNR $\snrmmse$ is larger than the actual SNR, then the variance is replaced by a larger term, i.e., the finite blocklength effects are overestimated. Overestimating the negative effects of finite blocklength coding should generally lead to an overestimation of the error probability. 
On the other hand, when the estimated SNR $\snrmmse$ is smaller than the actual SNR, then the channel is already better than predicted, and there is a high margin between the actual capacity and the rate, so errors in this regime are very rare. 

\begin{lemma}
	\label{lemma_csifbl}
    If Assumption~\ref{assumption:sigmafbl} holds, and if the estimated SNR $\snrmmse$ and the average SNR $\snravg$ are known, then the error probability $\perrornorm$ for a code with rate $\rate$ is bounded as
	\begin{align}
	\perrornorm &\leq Q\left(\frac{\snrmmse - \left(2^\rate-1\right)}{\sigmac(\snrmmse)}\right) 
	\defined \epstarget \;,
	\label{eq:lemma_csifbl_prob}
	\end{align}
	with
	\begin{align}
	\sigmacsq(\snrmmse) &= \sigmaasq + \sigmabsq(\snrmmse) \;.
	\label{eq:lemma_csifbl_sigmac}
	\end{align}
\end{lemma}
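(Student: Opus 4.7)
The plan is to start directly from Assumption~\ref{assumption:sigmafbl} (the bound (\ref{eq:blocklength_equiv_cap_redefined4})) and reduce the right-hand side to the Q-function form claimed in (\ref{eq:lemma_csifbl_prob}). The core observation is that, once $\sigmab$ is evaluated at the deterministic quantity $\snrmmse$, the random part inside the $\log_2$ is an affine combination of two independent Gaussian random variables, so the event inside the probability is equivalent to a scalar Gaussian tail event.

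First I would rewrite the event in (\ref{eq:blocklength_equiv_cap_redefined4}) by using monotonicity of $\log_2$ on its natural domain (with the standard convention that capacity equals $-\infty$ whenever the argument is non-positive, so the inequality $\log_2(\cdot)<r$ is still well-defined as an event). The event
\begin{equation*}
\log_2\!\bigl(1+\snrmmse+\Snrerrgauss - \sigmab(\snrmmse)\Ub\bigr) < r
\end{equation*}
is then equivalent to
\begin{equation*}
\sigmab(\snrmmse)\,\Ub - \Snrerrgauss \;>\; \snrmmse - (2^r-1).
\end{equation*}

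Next I would use that, by construction in Sec.~\ref{ssec:impcsi} and by the definition of $\Ub$ in (\ref{eq:blocklength_equiv_cap}), $\Snrerrgauss\sim\mathcal{N}(0,\sigmaasq)$ and $\Ub\sim\mathcal{N}(0,1)$ are independent (the channel-estimation noise during training is independent of the noise in the data phase, as noted right before Assumption~\ref{assumption:sigmafbl}). Hence the linear combination $\sigmab(\snrmmse)\Ub-\Snrerrgauss$ is zero-mean Gaussian with variance $\sigmabsq(\snrmmse)+\sigmaasq=\sigmacsq(\snrmmse)$, which is exactly the definition in (\ref{eq:lemma_csifbl_sigmac}). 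Standardizing by $\sigmac(\snrmmse)$ and using the definition of the Gaussian $Q$-function immediately yields
\begin{equation*}
\Prob{\sigmab(\snrmmse)\Ub - \Snrerrgauss > \snrmmse-(2^r-1)} = Q\!\left(\frac{\snrmmse-(2^r-1)}{\sigmac(\snrmmse)}\right),
\end{equation*}
which combined with Assumption~\ref{assumption:sigmafbl} gives (\ref{eq:lemma_csifbl_prob}).

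I do not expect a genuinely hard step here; the proof is essentially a one-line algebraic rearrangement plus the standard fact that a linear combination of independent Gaussians is Gaussian with summed variances. The only place that deserves a brief comment is the domain issue for $\log_2$, which I would dispatch by noting that the inequality inside the probability is purely an event on a real-valued Gaussian combination, so there is no measurability problem. All the structural work (the Gaussian approximation of the imperfect-CSI error $\Snrerr$, the conversion of the finite-blocklength rate perturbation $\Ub$ into an SNR perturbation via the Taylor step, and the suppression of the nonnegative $\deltau$ and of the dependence of $\sigmab$ on $\Snrerr$) was already carried out in Sec.~\ref{ssec:impcsi} and in Assumption~\ref{assumption:sigmafbl}, so the lemma really just collects these results into a closed-form Gaussian-tail expression.
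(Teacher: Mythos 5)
Your proposal is correct and follows essentially the same route as the paper's proof: both start from the bound in Assumption~\ref{assumption:sigmafbl}, strip the $\log_2$ to reduce the event to a condition on $\snrmmse+\Snrerrgauss-\sigmab(\snrmmse)\Ub$, combine the two independent Gaussians into a single zero-mean Gaussian with variance $\sigmaasq+\sigmabsq(\snrmmse)=\sigmacsq(\snrmmse)$, and standardize to obtain the $Q$-function. Your extra remark on the domain of $\log_2$ is a small additional nicety the paper omits, but it does not change the argument.
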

\begin{proof}
	The random variables $\Snrerrgauss\sim\mathcal{N}(0,\sigmaasq)$ and $\Ub\sim\mathcal{N}(0,1)$ are independent. Thus, the difference $\Snrerrgauss -\sigmab(\snrmmse)\Ub$ can be described by $\Uc\sim\mathcal{N}(0,\sigmacsq(\snrmmse))$, where $\sigmacsq(\snrmmse)$ is the sum of the two variances. Then, starting from (\ref{eq:blocklength_equiv_cap_redefined4}) we obtain:
	\begin{align}
	\perrornorm &\leq \Prob{\snrmmse+\Snrerrgauss - \sigmab(\snrmmse)\Ub <2^r-1}
	\nonumber\\
	&= \Prob{\snrmmse + \Uc < 2^\rate-1}\nonumber\\
	&= \Prob{-\frac{\Uc}{\sigmac(\snrmmse)} > \frac{\snrmmse - \left(2^\rate-1\right)}{\sigmac(\snrmmse)}} \;. \nonumber
	\end{align}
\end{proof}
While all our numerical results confirmed that Assumption~\ref{assumption:sigmafbl} holds, we are mainly interested in an approximation of the error probability $\perrornorm$. The expression in (\ref{eq:lemma_csifbl_prob}) still provides an approximation for $\perrornorm$ even if Assumption~\ref{assumption:sigmafbl} does not hold for some parameters.
However, an upper bound for $\perrornorm$ can be very useful in the context of ultra-reliable low latency systems with rate adaptation, specifically, when the transmitter wants to select a rate such that the error probability $\perrornorm$ is below a target error probability $\epstarget$:
\begin{corollary}
	\label{corollary_csifbl_rate}
Given an imperfect channel estimate $\snrmmse$ and a target error probability $\epstarget$ with $Q(\snrmmse/\sigmac(\snrmmse))<\epstarget<1/2$, the actual error probability $\perrornorm$ is less than or equal $\epstarget$ if Assumption~\ref{assumption:sigmafbl} holds and
if the transmitter chooses the rate
	\begin{align}
	\ratecomb(\snrmmse,\nd,\epstarget) &= \log_2\left(1+\snrmmse -\sigmac(\snrmmse) 
	Q^{-1}(\epstarget)\right)
	\;.
	\label{eq:lemma_csifbl_rate}
	\end{align}
\end{corollary}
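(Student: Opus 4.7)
The plan is to treat this corollary as a direct inversion of Lemma~\ref{lemma_csifbl}, entirely analogous to how Corollary~\ref{corollary_impcsi_rate} inverts Lemma~\ref{lemma_impcsi}. Specifically, I would start from the upper bound
\begin{equation*}
\perrornorm \leq Q\!\left(\frac{\snrmmse - (2^r-1)}{\sigmac(\snrmmse)}\right),
\end{equation*}
which holds under Assumption~\ref{assumption:sigmafbl}, set its right-hand side equal to $\epstarget$, and solve for $r$. Since $Q$ is strictly decreasing and continuous on $\mathbb{R}$, the equation $Q(x)=\epstarget$ has a unique solution $x=Q^{-1}(\epstarget)$, and hence
\begin{equation*}
\snrmmse - (2^r-1) = \sigmac(\snrmmse)\,Q^{-1}(\epstarget),
\end{equation*}
which rearranges to $r = \log_2\!\left(1+\snrmmse - \sigmac(\snrmmse) Q^{-1}(\epstarget)\right)$, matching the claimed $\ratecomb(\snrmmse,\nd,\epstarget)$ in \eqref{eq:lemma_csifbl_rate}.

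Next I would verify that the stated hypotheses on $\epstarget$ are exactly what is needed for the expression to be well defined. The condition $\epstarget<1/2$ ensures $Q^{-1}(\epstarget)>0$, so that the subtracted term is actually a back-off from the estimated capacity rather than an unphysical increase. The condition $\epstarget > Q(\snrmmse/\sigmac(\snrmmse))$ is equivalent, by monotonicity of $Q$, to $\sigmac(\snrmmse)\,Q^{-1}(\epstarget) < \snrmmse$, which guarantees that the argument of the outer $\log_2$ is strictly greater than $1$, so the rate is strictly positive and thus meaningful.

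Finally, substituting this choice of $r$ back into Lemma~\ref{lemma_csifbl} recovers $\perrornorm \leq \epstarget$ by construction, which is the claim. I do not expect any serious obstacle here: the content is a one-to-one inversion of a monotone relation, and the sole delicate point is bookkeeping of the two-sided condition on $\epstarget$ (positivity of the log argument on one side, positivity of $Q^{-1}(\epstarget)$ on the other). The only conceptual caveat to flag is that the bound used in the proof holds under Assumption~\ref{assumption:sigmafbl}, which is exactly why the corollary's statement carries that same hypothesis.
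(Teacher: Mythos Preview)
Your proposal is correct and matches the paper's own proof, which simply states that the result follows by solving \eqref{eq:lemma_csifbl_prob} for $r$, with the condition $\epstarget>Q(\snrmmse/\sigmac(\snrmmse))$ ensuring $r>0$. Your version is just a more explicit spelling-out of that one-line inversion.
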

\begin{proof}
The proof follows by solving (\ref{eq:lemma_csifbl_prob}) for $r$, with $\epstarget>Q(\snrmmse/\sigmac(\snrmmse))$ ensuring that $r>0$.
\end{proof}

\subsection{Optimal Rate Adaptation}
\label{ssec:optimal_ra}
The problem addressed in this paper is finding the parameters of training sequence length $\nt$ and rates $r(\snrmmse)$ (with corresponding values of $\nd$ and $\perrornorm$) that minimize the upper bound (\ref{eq:pdelay_violation}) on the delay violation probability $\pv(w)$. 
For the rate adaptation, we will show in this section that Lemma~\ref{lemma_csifbl} and Corollary~\ref{corollary_csifbl_rate} can be used to determine a nearly optimal solution.

As we already observed in Sec.~\ref{ssec:queueing}, an optimal rate adaptation strategy minimizes $\Mellin_\Ssnr(1-\s)$ for some parameter $\s>0$, or equivalently, minimizes $\Mellin_\Ssnr(\s)$ for some $\s<1$.
An approximate solution can be found with Lemma~\ref{lemma_csifbl}, which provides an upper bound $\epstarget$ on the error probability $\perrornorm$, resulting in an upper bound on $\Mellin_\Ssnr(\s)$.
However, instead of choosing the rates $r(\snrmmse)$ and then bounding the error probability, Corollary~\ref{corollary_csifbl_rate} allows choosing a target error probability $\epstarget$ for each $\snrmmse$ and then computing the achievable rate $\ratecomb(\snrmmse,\nd,\epstarget)$, resulting in
\begin{align}
\Mellin_\Ssnr(\s)&
\leq \int\limits_{0}^{\infty}\underset{g(\snrmmse,\epstarget)}{\underbrace{\left((1-\epstarget)e^{
			\nd\cdot\ratecomb(\snrmmse,\nd,\epstarget)(\s-1)
		} + \epstarget\right)}}f_\Snrmmse(\snrmmse)d\snrmmse \; .
\label{eq:mellin_service_comb2}
\end{align}
The inequality is due to $\perrornorm\leq\epstarget$ as established by Lemma~\ref{lemma_csifbl}.
The optimal values of $\epstarget$ are the ones which minimize the right-hand side of (\ref{eq:mellin_service_comb2}) for some parameter $\s<1$. They can be found by minimizing the term $g(\snrmmse,\epstarget)$ over $\epstarget$ individually for each discretized value $\snrmmse$. Using observations from \cite{gursoy2013throughput}, we find:
\begin{lemma}
	\label{lemma_convex_g}
	$g(\snrmmse,\epstarget)$ is convex in $\epstarget$  for $Q(\snrmmse/\sigmac(\snrmmse))<\epstarget<1/2$ and $\s<1$. 
\end{lemma}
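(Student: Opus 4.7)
The plan is to reduce the claim to a direct second-derivative sign check. Since the summand $\epstarget$ in $g$ is linear in $\epstarget$, it contributes nothing to the second derivative, so it suffices to prove that $h(\epstarget) \defined (1-\epstarget)T(\epstarget)^{\alpha}$ is convex, where I introduce the abbreviations $\alpha \defined \nd(\s-1)/\ln 2$ (strictly negative, since $\s<1$) and $T(\epstarget) \defined 1+\snrmmse - \sigmac(\snrmmse)Q^{-1}(\epstarget)$. With these, the definition of $\ratecomb$ from Corollary~\ref{corollary_csifbl_rate} gives $e^{\nd\ratecomb(\snrmmse,\nd,\epstarget)(\s-1)} = T(\epstarget)^{\alpha}$, and the lower-bound condition $\epstarget > Q(\snrmmse/\sigmac(\snrmmse))$ guarantees $T(\epstarget) > 1$, so all real powers of $T$ are well defined.

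Next I would compute the first two derivatives of $T$ via the inverse-function theorem. With $u\defined Q^{-1}(\epstarget)$, the identity $Q'(x)=-\frac{1}{\sqrt{2\pi}}e^{-x^2/2}$ yields $\frac{dT}{d\epstarget} = \sigmac(\snrmmse)\sqrt{2\pi}\,e^{u^2/2}>0$, so $T$ is strictly increasing, and differentiating once more gives $\frac{d^2T}{d\epstarget^2} = -2\pi\,\sigmac(\snrmmse)\,u\,e^{u^2}$. The upper-bound condition $\epstarget<1/2$ forces $u>0$, and hence $T''(\epstarget)<0$ throughout the relevant range. This sign reversal between $T'$ and $T''$ is the geometric heart of the argument: $T$ is increasing but concave in $\epstarget$.

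Finally, I would expand $h''$ by the product rule,
\begin{equation*}
h''(\epstarget) \;=\; -2\alpha\,T^{\alpha-1}T' \;+\; (1-\epstarget)\,\alpha(\alpha-1)\,T^{\alpha-2}(T')^{2} \;+\; (1-\epstarget)\,\alpha\,T^{\alpha-1}T'',
\end{equation*}
and check that each of the three summands is positive on the prescribed domain. The first is positive because $\alpha<0$ while $T,T'>0$. The second is positive because $\alpha<0$ and $\alpha-1<0$, so $\alpha(\alpha-1)>0$, and the remaining factors are all positive (using $\epstarget<1/2<1$). The third is positive because $\alpha<0$ and $T''<0$ combine into $\alpha T''>0$, while $(1-\epstarget)T^{\alpha-1}>0$. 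Summing gives $h''(\epstarget)>0$, which is exactly the claim.

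The main obstacle is not depth but sign bookkeeping: the convexity rests on the coincidence that in the regime $\epstarget<1/2$ the inverse $Q$-function lies in the concave portion of its graph, so $T''<0$; without the upper limit $\epstarget<1/2$ this third summand would flip sign and the argument would fail. The other two summands are convex contributions one would expect from the general shape of $(1-\epstarget)T^{\alpha}$ with $\alpha<0$ and $T$ increasing, and require only that the domain avoid the degenerate endpoints already excluded by the hypothesis $Q(\snrmmse/\sigmac(\snrmmse))<\epstarget<1/2$.
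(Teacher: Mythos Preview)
Your proposal is correct and takes essentially the same approach as the paper: both proofs compute the second derivative of $(1-\epstarget)\bigl(1+\snrmmse-\sigmac(\snrmmse)Q^{-1}(\epstarget)\bigr)^{\nd(\s-1)/\ln 2}$, split it into the same three summands, and verify each is positive using $\alpha<0$, $T>1$, $T'>0$, and (crucially, via $\epstarget<1/2$) $T''<0$. The only cosmetic difference is that you drop the linear $\epstarget$ term up front and work with $T,\alpha$, whereas the paper keeps it and works with constants $a,b,c$ and the derivatives of $Q^{-1}$ directly; the computations and sign bookkeeping are otherwise identical.
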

\begin{proof}
	See the Appendix.
\end{proof}
The convexity property establishes that the optimal $\epstarget$ for each $\snrmmse$ is unique and can be found efficiently. 
The value of $\epstarget$ then directly provides the optimal rate $\ratecomb(\snrmmse,\nd,\epstarget)$ that should be chosen by the transmitter.

Even though our numerical studies found no case where the approximation $\epstarget$ from Lemma~\ref{lemma_csifbl} was below $\perrornorm$ in (\ref{eq:yang_normalapprox}), Lemma~\ref{lemma_csifbl} relies on Assumption~\ref{assumption:sigmafbl}. 
Thus, after the optimal rate $\ratecomb(\snrmmse,\nd,\epstarget)$ has been found, the error probability $\perrornorm$ can be computed from (\ref{eq:yang_normalapprox}).

\subsection{Further Uses of the Approximation}
\label{ssec:further_uses}
Beyond our own analysis, Lemma~\ref{lemma_csifbl} could also be used for analyzing further system properties and devising other system features:
\subsubsection{Outdated CSI}
Assume that the transmitter has an outdated observation $\hat{H}_{\mathrm{old}}$ of a Rayleigh fading channel, which is related to the current value $H$ as $\hat{H}_{\mathrm{old}} = \rho H + Z$ with known $\rho$, $H\sim\mathcal{CN}(0,1)$, $Z\sim\mathcal{CN}(0,1-\rho^2)$, and $H$ and $Z$ mutually independent. Furthermore, assume that the receiver has perfect knowledge of the current channel $H$ and that is knows (e.g. from additional headers) which coding scheme was chosen by the transmitter. Then, the MMSE estimate of the channel is given as $\hat{H}_{\mathrm{old,MMSE}}=\rho\hat{H}_{\mathrm{old}}$. Replacing $\Hmmse$ in Sec.~\ref{sec:system_model} with $\hat{H}_{\mathrm{old,MMSE}}$ and $\sigmanoise^2$ with $(1-\rho^2)$ leads to the same results, i.e., the error probability $\perrornorm$ for this case can be computed through (\ref{eq:yang_normalapprox}) and approximated through Lemma~\ref{lemma_csifbl}.

\subsubsection{Power Allocation}
In certain cases, e.g., in battery-powered devices, the transmitter may need to adapt the transmit power to the channel. Power control in the finite blocklength regime was analyzed in \cite{yang2015optimum} for perfect CSI and without considering the queueing performance.
We assume now for simplicity that the channel training is always performed with the same transmit power, and that the SNR $\Snr$ and estimated SNR $\snrmmse$ are related to this training power. During the data transmission phase, the transmitted signal power is scaled by a factor $\phi>0$ and thus the SNR during the data transmission phase is changed to $\phi\Snr = \phi\snrmmse + \phi\Snrerrgauss + \phi\Snrerrdelta$. We start again by assuming, as in Assumption~\ref{assumption:sigmafbl}, that
\begin{equation}
\perrornorm \leq \Prob{\log_2\left(
	1+\phi\snrmmse+\phi\Snrerrgauss - \sigmab(\phi\snrmmse)\Ub \right) < r} \; .
\end{equation}
Following the same steps as in Sec.~\ref{ssec:finite_blocklength}, the error probability can be bounded as
\begin{align}
\perrornorm &\leq Q\left(\frac{\phi\snrmmse - \left(2^\rate-1\right)}{\sigma_{\mathrm{PA}}}\right) \; ,
\label{eq:perrorbound_pa}
\end{align}
with
$\sigma_{\mathrm{PA}}^2 = \phi^2\sigmaasq + \sigmabsq(\phi\snrmmse)$. Although there is no closed-form solution for the minimal power scaling $\phi$ such that the error probability $\perrornorm$ is below a certain target, one can quickly compute (\ref{eq:perrorbound_pa}) for different values of $\phi$ in order to determine the minimum required transmit power. 

\section{Numerical Evaluation}
\label{sec:numerics}
For numerical evaluation, Sec.~\ref{ssec:validation} addresses the accuracy of the error probability approximation from Lemma~\ref{lemma_csifbl}, especially with respect to its use in rate adaptation. 
In Sec.~\ref{ssec:validation_system_model}, we validate the accuracy of the system model itself.
In Sec.~\ref{ssec:delay_performance}, we compare the system performance without delay constraints with the performance under strict delay constraints. Finally, in Sec.~\ref{ssec:delay_arrivalrate}, we investigate the trade-off between training and data transmission time under varying delay constraints.

\subsection{Validation}
\label{ssec:validation}
\begin{figure}[t]
	\centering
	\includegraphics[width=0.98\figurewidth]{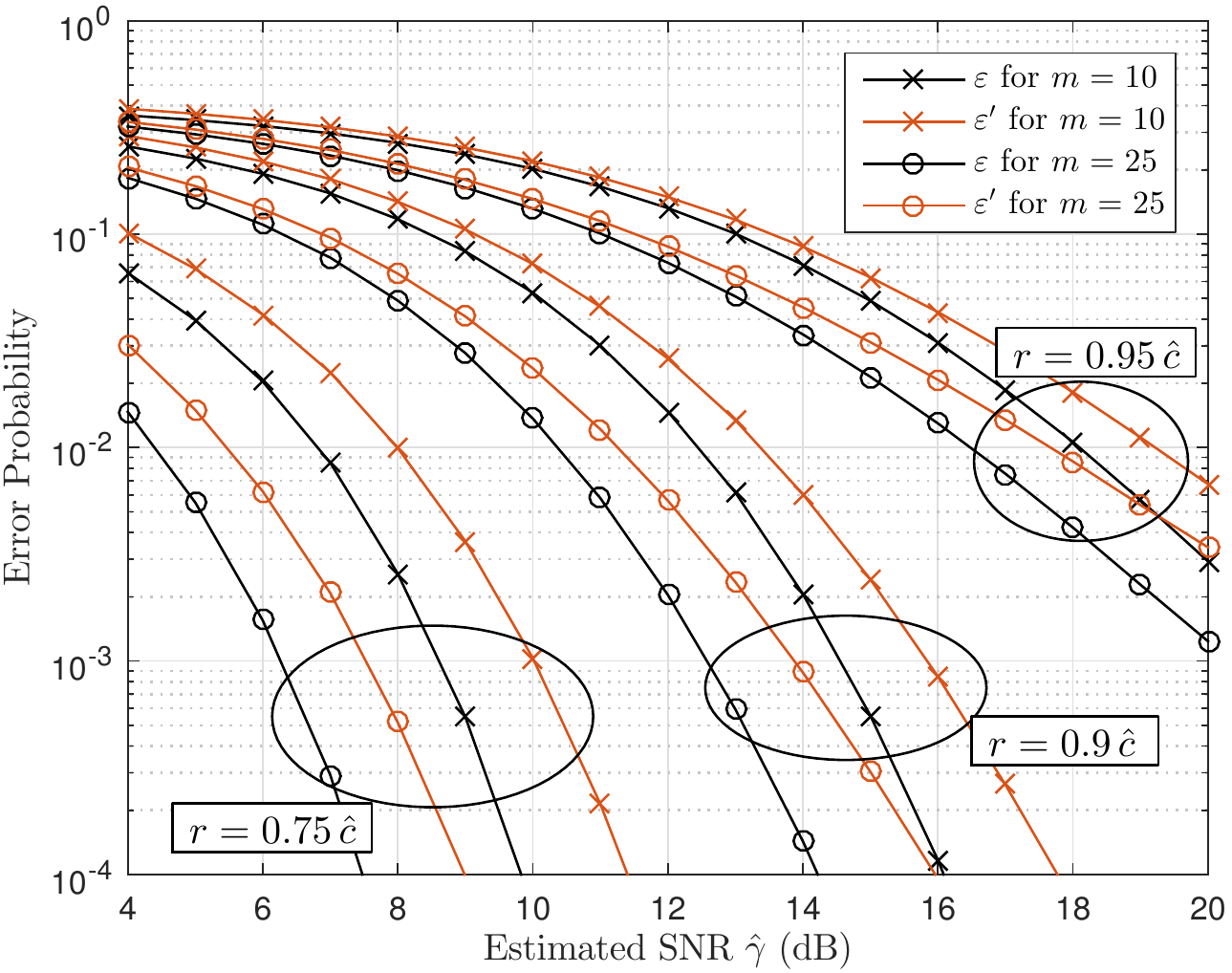}
	\caption{Error probability $\perrornorm$ and approximation/upper bound $\epstarget$ from Lemma~\ref{lemma_csifbl}  vs. estimated SNR $\snrmmse$, when the $r(\snrmmse)= \kappa \log_2(1+\snrmmse)$ with $\fraccap \in\{0.75, 0.9, 0.95\}$. Two choices of training length $\nt\in\{10,25\}$ are considered. $\nd=200$, $\snravg=15$~dB.}
	\label{fig:snrmmse_x_eps}
\end{figure}
In Fig.~\ref{fig:snrmmse_x_eps}, we compare the error probability $\perrornorm$ in (\ref{eq:yang_normalapprox}) with its upper bound $\epstarget$ from Lemma~\ref{lemma_csifbl}. The length of the training sequence is $\nt \in \{10,25\}$, the average SNR $\snravg$ is 15 dB, and the length of the data transmission phase is fixed at $\nd=200$. In this example, the rate $r(\snrmmse)$ is simply chosen as a fraction $\fraccap$ of the estimated capacity $\hat{c}=\log_2(1+\snrmmse)$, with $\fraccap\in \{0.75, 0.9, 0.95\}$.
First of all, we confirm in all cases that $\epstarget$ is indeed an upper bound on $\perrornorm$, as expected from Lemma~\ref{lemma_csifbl}. Second, even though we observe that this bound is not tight, especially when $\epstarget<10^{-2}$, it can be seen that the upper bound $\epstarget$ predicts quite accurately how much the error probability $\perrornorm$ will change when reducing the rate or when increasing the number of training symbols $\nt$. As a result, the bound/approximation $\epstarget$ may be accurate enough to decide how the rate $r(\snrmmse)$ should be adapted to the estimated SNR $\snrmmse$ and what training sequence length to choose for optimal performance.

Specifically, the optimal delay performance can be reached by iterating over the parameter $\s$ and finding the rate adaptation which minimizes $\Mellin_\Ssnr(\s)$ for each value of $\s$. 
For $\snravg=15~\mathrm{dB}$, $\nd=200$, $\nt=25$, and a specific choice\footnote{For these parameters and an arrival rate $\arrpersymb=1.4$ bits/symb., the bound $\Mfun{\s_1,w}$ for target delay $w=5$ was minimal at $\s_1\approx 0.010$, thus $\Mellin_{\Ssnr}(\s)$ must be evaluated at $\s=1-\s_1 \approx 0.99$.} of $\s=0.99$, we compare in Fig.~\ref{fig:snrmmse_x_rate} the rate adaptation based on the Corollary~\ref{corollary_csifbl_rate} as described in Sec.~\ref{ssec:optimal_ra} (red dashed curve), with a rate adaptation scheme that directly tries to minimize (\ref{eq:mellin_service_comb1}) by computing $\perrornorm$ numerically for many different values of $r$ (black solid curve, labelled as \emph{perfectRA}). We find that our proposed \emph{approxRA} scheme always selects a slightly lower rate than the \emph{perfectRA} scheme. This is due to $\epstarget$ being an upper bound to $\perrornorm$: the approximate rate adaptation scheme always overestimates the error probability and then chooses a lower rate in order to avoid too many errors. 
However, as our system model may not be accurate when $\perrornorm$ becomes extremely small (see Sec.~\ref{ssec:data_transmission}), we will from now on always restrict the \emph{approxRA} scheme to choose only values $\epstarget \ge 10^{-3}$. The resulting rates are shown in the dashed blue curve. 
We find that the difference between the \emph{perfectRA} and \emph{approxRA} schemes is very small: the value of $\Mellin_\Ssnr(\s)$ increases only slightly from $0.0291$ to $0.0294$. Furthermore, restricting \emph{approxRA} to $\epstarget \ge 10^{-3}$ is only relevant for $\snrmmse>9~\mathrm{dB}$, and has almost no effect on the value of $\Mellin_\Ssnr(\s)$, as the Mellin transform depends mostly on the behavior at low values of $\snrmmse$, where the error probabilities are much higher than $10^{-3}$ and the data rates are small.
In conclusion, even though our approximation is not tight, it can provide a nearly optimal solution to the rate adaptation problem.


Additionally, Fig.~\ref{fig:snrmmse_x_rate} shows two suboptimal rate adaptation schemes. The green dash-dotted curve shows the rate $r$ that would be chosen when the transmitter always keeps the error probability at a fixed value $\perrornorm=0.003$ for all values of $\snrmmse$. The value of $\Mellin_\Ssnr(\s)$ increases to $0.0394$ for this scheme\footnote{it is even higher for $\perrornorm\in\{0.0001,0.001, 0.002,0.01\}$ and all other values we tested}.
The second suboptimal rate adaptation scheme (violet dotted curve) is one that does not take the delay requirements into account, but optimizes the parameters to achieve the maximum expected goodput $\expgoodput$ in (\ref{eq:expected_goodput1}). This scheme favors high data rates over high reliability, and causes $\Mellin_\Ssnr(\s)$ to increase to $0.0438$. 
Due to the massive increases in $\Mellin_\Ssnr(\s)$, we suspect that the delay performance will deteriorate with both suboptimal schemes.
\begin{figure}[t]
	\centering
	\includegraphics[width=1.1\figurewidth]{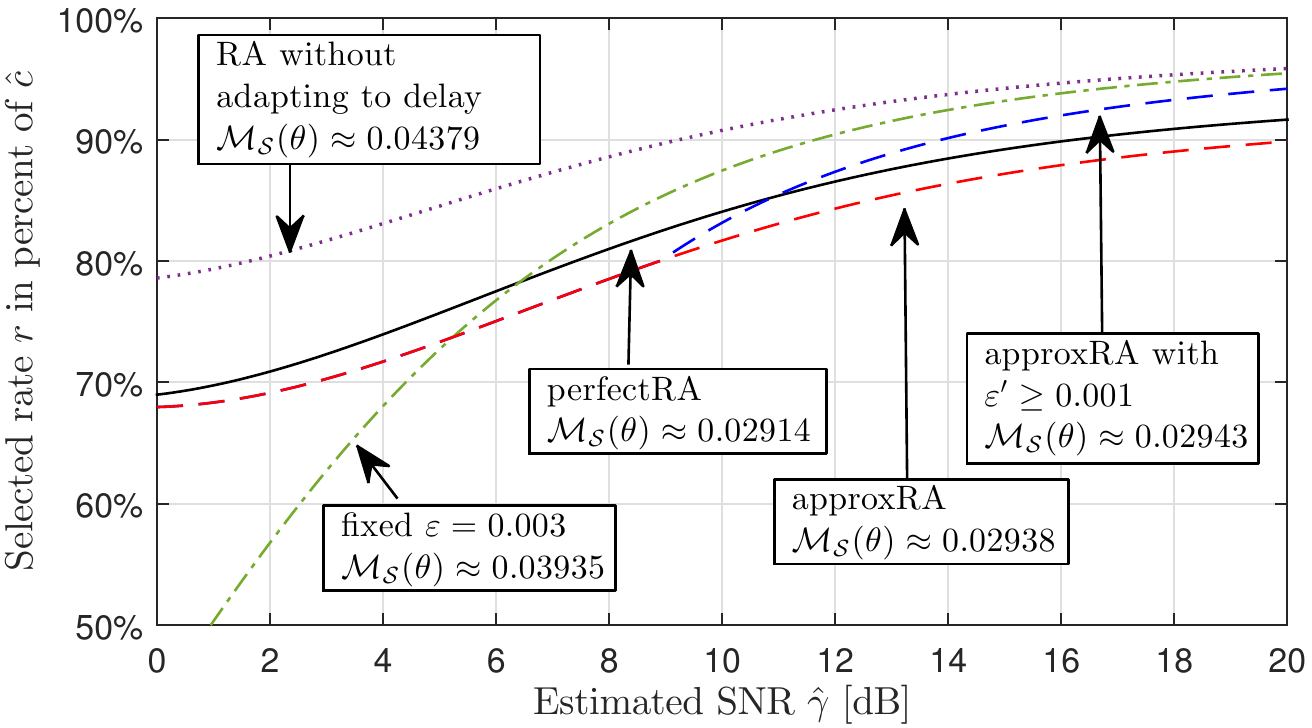}
	\caption{Choice of $r(\snrmmse)$ in percent of $\hat{c}=\log_2(1+\snrmmse)$ (with resulting values of $\Mellin_\Ssnr(\s)$) for different rate adaptation schemes. $\nt=25$, $\nd=200$, $\snravg=15$~dB, $\s=0.99$.}
	\label{fig:snrmmse_x_rate}
\end{figure}

This suspicion is confirmed by Fig.~\ref{fig:targetdelay_x_pviol_incsim}. It shows the delay violation probability $\pv(w)$, which can be obtained by simulating the queueing system with random instances of the service and arrival process, and its analytical upper bound (\ref{eq:pdelay_violation}), versus the target delay $w$ for those different rate adaptation schemes. We first note that while the upper bound (\ref{eq:pdelay_violation}) on $\pv(w)$ is not tight, which was also observed in similar works \cite{petreska2015recursive,schiessl2015delay}, the upper bound is very useful, as it not only predicts the slope of $\pv(w)$ correctly, but also predicts which parameters (here: which rate adaptation schemes) are optimal with respect to $\pv(w)$.
We observe that the delay bounds for the \emph{perfectRA} (solid black curve) and \emph{approxRA} (dashed red) are almost indistinguishable, which is in line with the results in Fig.~\ref{fig:snrmmse_x_rate}.
The difference between the two schemes in $\pv(w)$ as obtained from simulations is also not noticeable. Contrary to that, when using the suboptimal schemes, which either use fixed $\perrornorm=0.003$ or do not adapt the rate to the delay constraints, the delay violation probability $\pv(w)$ at $w=4$ degrades by nearly an order of magnitude, and this degradation is correctly predicted by the analytical bounds. 
This suggests that it is quite important to solve the rate adaptation problem optimally, taking the delay requirements into account. 

\begin{figure}[t]
	\centering
	\includegraphics[width=0.99\figurewidth]{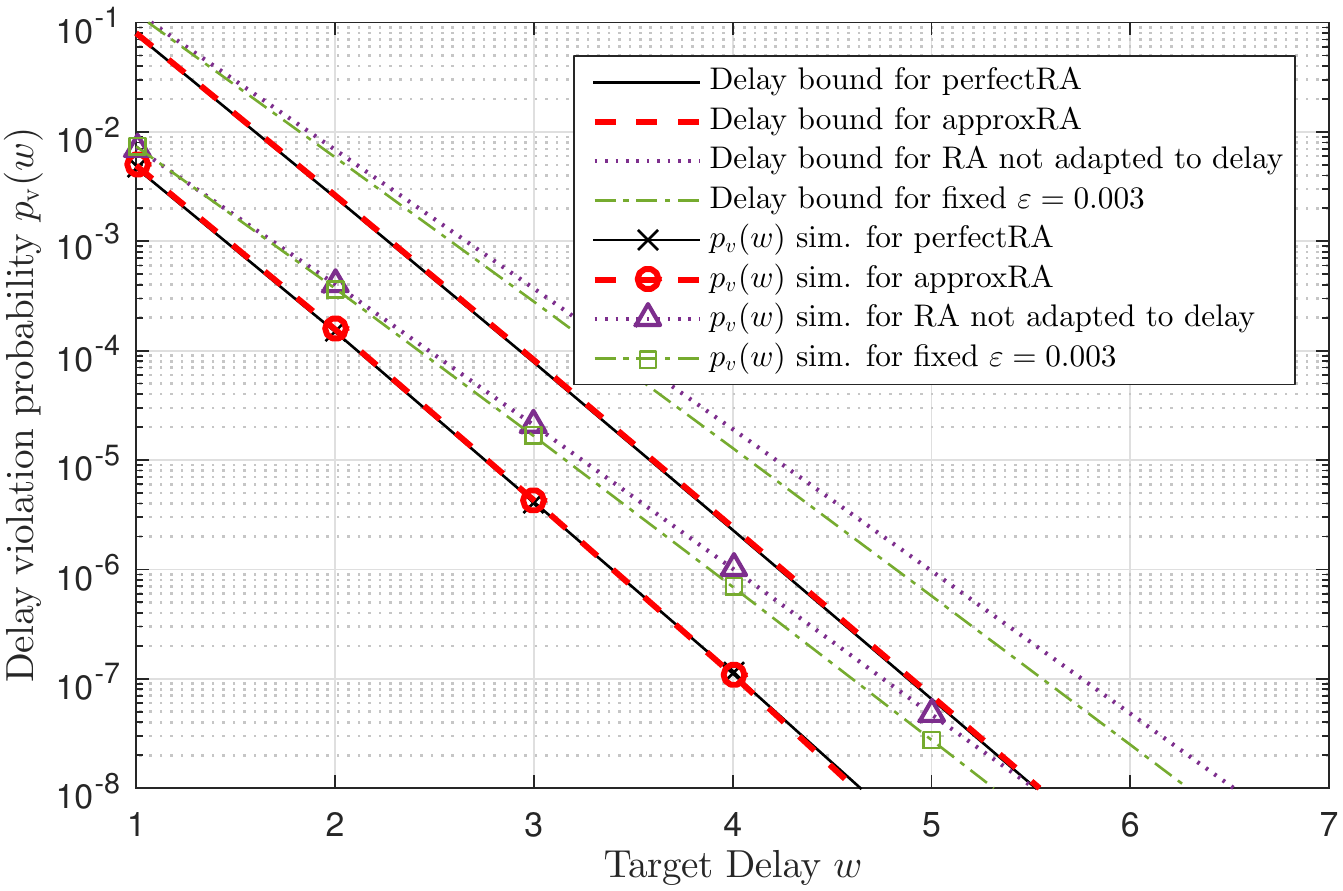}
	\caption{Target delay violation probability $\pv(w)$ (obtained from simulations over $10^{11}$ time steps) and its respective upper bound (\ref{eq:pdelay_violation}) vs. target delay $w$. $\nt=25$, $\nd=200$, $\snravg=15$~dB, arrival rate $\arrpersymb=1.4$ bits/symbol.}
	\label{fig:targetdelay_x_pviol_incsim}
\end{figure}

\subsection{Validation of the System Model}
\label{ssec:validation_system_model}
While we assumed throughout the paper that the decoding error probability is exactly equal to $\perrornorm$ in (\ref{eq:yang_normalapprox}), $\perrornorm$ is itself only an approximation, and depends on the assumption of perfect CSIR. Using \cite[Cor.~3]{yang2013quasi}, we can numerically compute a strict lower bound on the achievable rate for a given error probability. For the error probability, we set the optimal value $\epstarget$ that was found with the rate adaptation using Lemma~\ref{lemma_convex_g}. In Fig.~\ref{fig:nt_x_netcalcrate}, we compare the delay performance of a system with this achievable rate to the delay performance of the original system model. In particular, we compare the maximum possible size of arriving data packets per time slot such that the system can still guarantee different quality-of-service constraints, versus the training sequence length $\nt$. In order to guarantee a deadline of only $w=5$ slots with $\pv(w)<10^{-8}$, only 200-300 bits (depending on the training length) should arrive in each time slot. The performance (i.e., the maximum arrival size) that can be guaranteed through \cite[Cor.~3]{yang2013quasi} can be 10\% below the performance predicted from our system model. Even as the training length $\nt$ increases to $10^4$, which results in nearly perfect CSI at transmitter and receiver, this performance gap remains. However, in case the CSI becomes perfect, we know that our system model converges to earlier results by Polyanskiy et al. \cite[Thm.~54]{polyanskiy2010channel}, which were shown to be quite accurate. Therefore, while \cite[Cor.~3]{yang2013quasi} provides a strict lower bound on the performance, it is presumably not a tight bound when the CSI is nearly perfect. The following results assume again that the decoding error probability is given by (\ref{eq:yang_normalapprox}).
\begin{figure}[t]
	\centering
	\includegraphics[width=0.9\figurewidth]{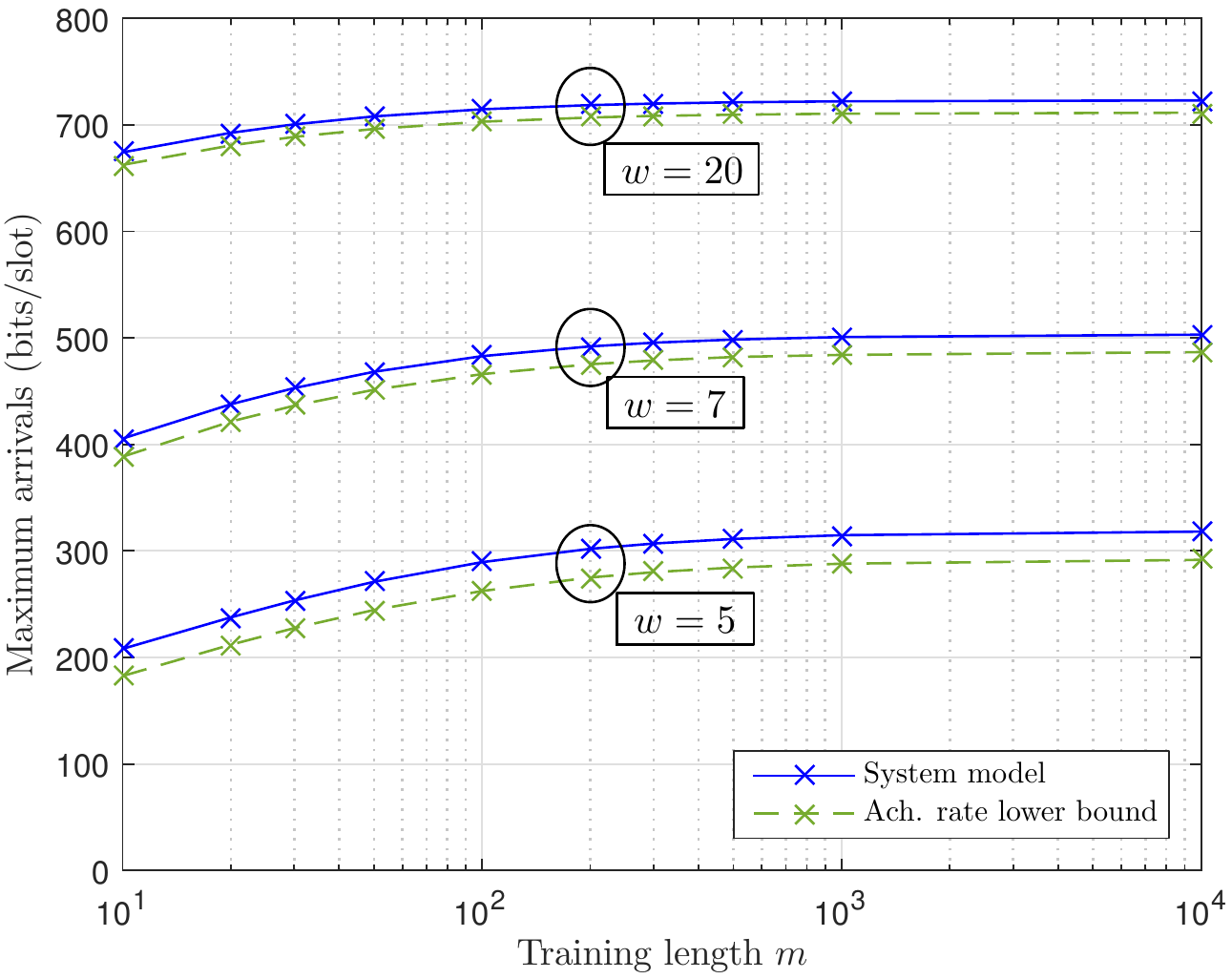}
	\caption{Maximum arrivals $\Abit$ in bits per time slot vs. training length $\nt$ for $\nd=200$, $\snravg=15~\mathrm{dB}$ such that for target delay $w\in\{5,7,20\}$ slots, the delay constraint $\pv(w)<10^{-8}$ is still satisfied.}
	\label{fig:nt_x_netcalcrate}
\end{figure}

\subsection{Performance under Delay Constraints}
\label{ssec:delay_performance}
Next, we show the impact on the performance due to imperfect CSI and finite blocklength under different delay constraints. When the system has to operate under strict delay constraints, the transmitter should generally try to achieve high reliability, which may require spending more time on channel training. In Fig.~\ref{subfig:expgoodput} we show the expected goodput $\expgoodput$, i.e. the performance when there are no delay constraints, next to Fig.~\ref{subfig:maxarrivals}, which shows the maximum supported arrival rate $\arrpersymb$ such that the system still meets strict delay constraints. In both cases, we show the performance against the average SNR $\snravg$, for different channel models and different parameters. In all cases, the total length $\ntotal=\nt+\nd$ of one time slot is 250 symbols.
First of all, Fig.~\ref{subfig:expgoodput} shows the expected goodput. The black curve (``PCSI,IBL") shows the performance for a simplified channel model where the CSI is assumed to be perfect and transmissions are error-free at rate $\rate$ equal to the capacity. The black curve marked with `+' (``PCSI,FBL") shows the performance when the CSI is still assumed to be perfect, but finite blocklength effects are taken into account. The expected goodput decreases because some transmissions fail and also because the transmitter must back off from the capacity and choose a smaller rate to get a low probability of error. When the effects of imperfect channel state information are taken into account as well (two blue curves, ``ICSI,FBL"), the performance is even lower. This is again because a backoff is required and because transmissions fail with a higher probability when the channel is unknown. Another reason is that $\nt$ symbols are used for channel estimation and thus fewer symbols are used for data transmissions, which leads to lower normalized goodput. 
In this scenario, i.e. when there are no delay constraints, we observe that the system performs better with $\nt=5$ training symbols than with $\nt=50$ over a wide range of the average SNR $\snravg$. Here, the benefits of better channel estimation at $\nt=50$ cannot compensate for the reduced length of the data transmission phase. Nevertheless, even when $\nt=50$ symbols (20\% of the time slot) are spent on training, the performance is still better than the performance of a system that does not measure the channel ($\nt=0$) and transmits at a fixed rate (red curve, ``NoCSI,FBL"). However, note that the differences are small, so this trend might change under different assumptions, for example when the feedback of CSI is not instantaneous and error-free.

\begin{figure}[t]
	\subfloat[\label{subfig:expgoodput}]{%
		\includegraphics[width=0.9\figurewidth]{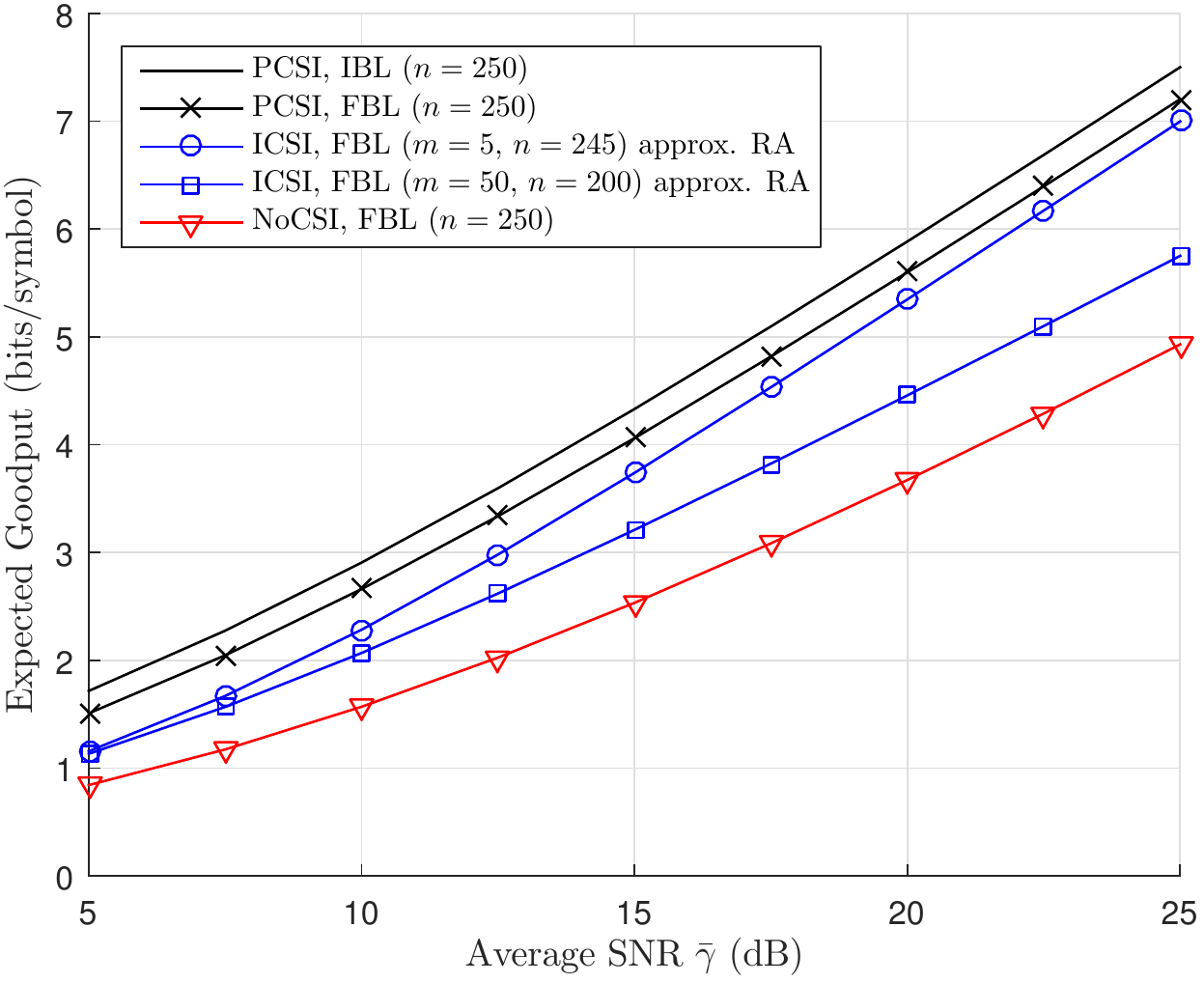}
	}
	\hfill
	\subfloat[\label{subfig:maxarrivals}]{%
		\includegraphics[width=0.9\figurewidth]{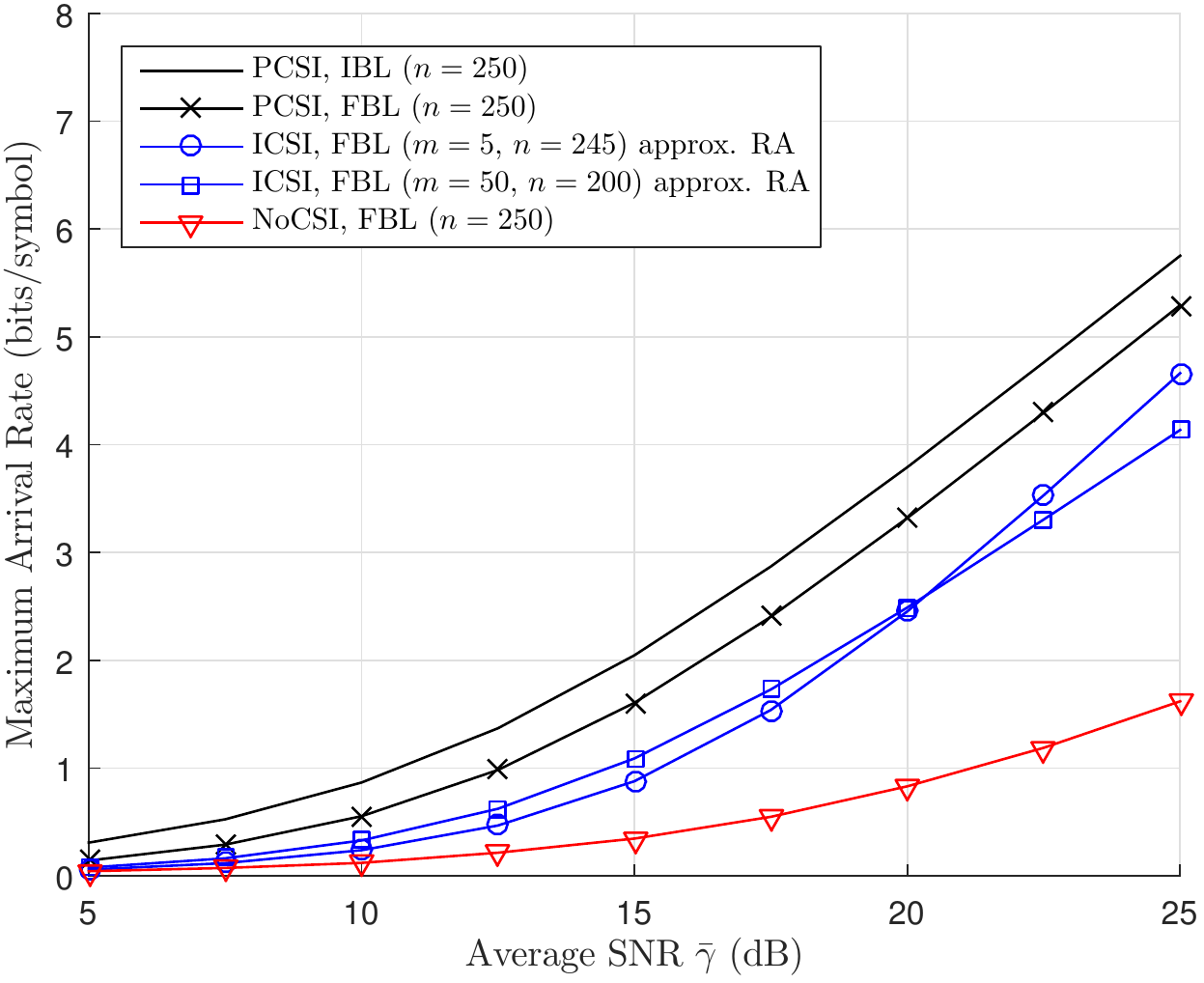}
	}
	\vspace{-2mm}
	\caption{(a) Expected goodput $\expgoodput$ vs. average SNR $\snravg$ and total slot length $\ntotal=250$. (b) Maximum arrival rate $\arrpersymb$ vs. average SNR for $\ntotal=250$ such that for target delay $w=5$ slots, $\pv(w)<10^{-8}$.}
	\label{fig:dummy}
\end{figure}

Fig.~\ref{subfig:maxarrivals} shows the maximum arrival rate $\arrpersymb$ per symbol if the upper bound on the delay violation probability $\pv(w)$ for a target delay of $w=5$ slots should not be higher than $10^{-8}$, for the same system parameters as in Fig.~\ref{subfig:expgoodput}. It can be seen that it is very difficult for the system to meet these target requirements with imperfect CSI when the average SNR is low. The requirements can only be satisfied by choosing an arrival rate $\arrpersymb$ that is significantly lower than the expected goodput. 
At low SNR, we also see that using $\nt=50$ training symbols now leads to better performance than $\nt=5$. This means that under strict delay requirements, the higher reliability gained through better channel estimation is more beneficial than additional data transmission time. The trade-off between $\nt$ and $\nd$ depends on the delay constraints.
Furthermore, we observe that the fixed rate transmission scheme performs significantly worse than the schemes adapting the rate to the imperfect measurement. Thus, rate adaptation seems to be beneficial especially under tight delay constraints.

\subsection{How much time to spend on training?}
\label{ssec:delay_arrivalrate}
\begin{figure}[t]
	\centering
	\includegraphics[width=0.9\figurewidth]{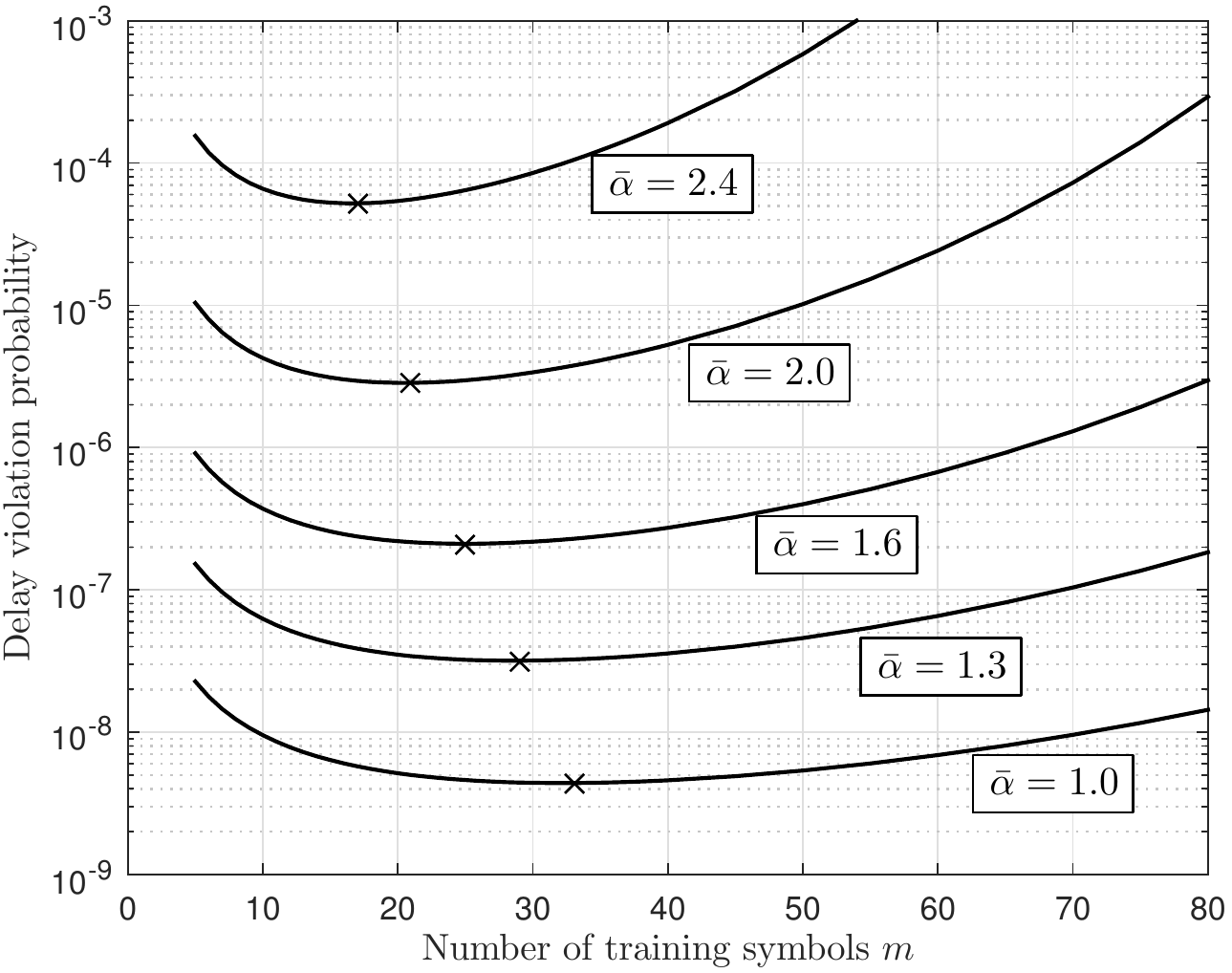}
	\caption{Bounds on the delay violation probability $\pv(w)$ vs. number of training symbols $\nt$ for target delay $w=5$, average SNR $\snravg=15$, $\ntotal=250$ and different arrival rates (in bits per symbol). The minimum point (optimal $\nt$) is marked with `x'.}
	\label{fig:ntrain_x_pviol}
\end{figure}
The previous Fig.~\ref{subfig:maxarrivals} implies that with $\ntotal=250$ and $\snravg=15$~dB, it is possible to meet the QoS target $\pv(w=5)<10^{-8}$ when $\arrpersymb\approx 1.0$ bits per symbol and $\nt=50$. For $\nt=5$, the performance is worse, but what is the optimal value of $\nt$? Fig.~\ref{fig:ntrain_x_pviol} shows the delay bound for $\ntotal=250$, $\snravg=15$~dB and different values of $\arrpersymb$ versus the training length $\nt$. For $\arrpersymb=1.0$, the smallest delay violation probability is obtained at $\nt=33$ training symbols, but the performance remains similar for $\nt$ between 20 and 50. On the other hand, when the arrival rate is increased, fewer training symbols should be used; the delay violation probability easily increases by an order of magnitude when the transmitter chooses too many (e.g. $\nt=50$) training symbols.

Finally, Fig.~\ref{fig:targetdelay_x_ntopt} shows the optimal number of training symbols $\nt$ for different delay requirements and different SNR levels. Here, we require that for all values of the target delay $w$, the bound on the delay violation probability is $\pv(w)<10^{-8}$. The optimal $\nt$ is then defined as the value of $\nt$ for which the system can support the highest arrival rate $\arrpersymb$ while still satisfying the delay constraints. We observe that when the delay requirements become very strict, a large fraction of the available resources must be spent on training. On the other hand, when the delay requirements become more relaxed, the optimal value of $\nt$ is much smaller. Lastly, when the SNR decreases, channel estimation becomes less reliable, and more training symbols are required.
\begin{figure}[t]
	\centering
	\includegraphics[width=0.9\figurewidth]{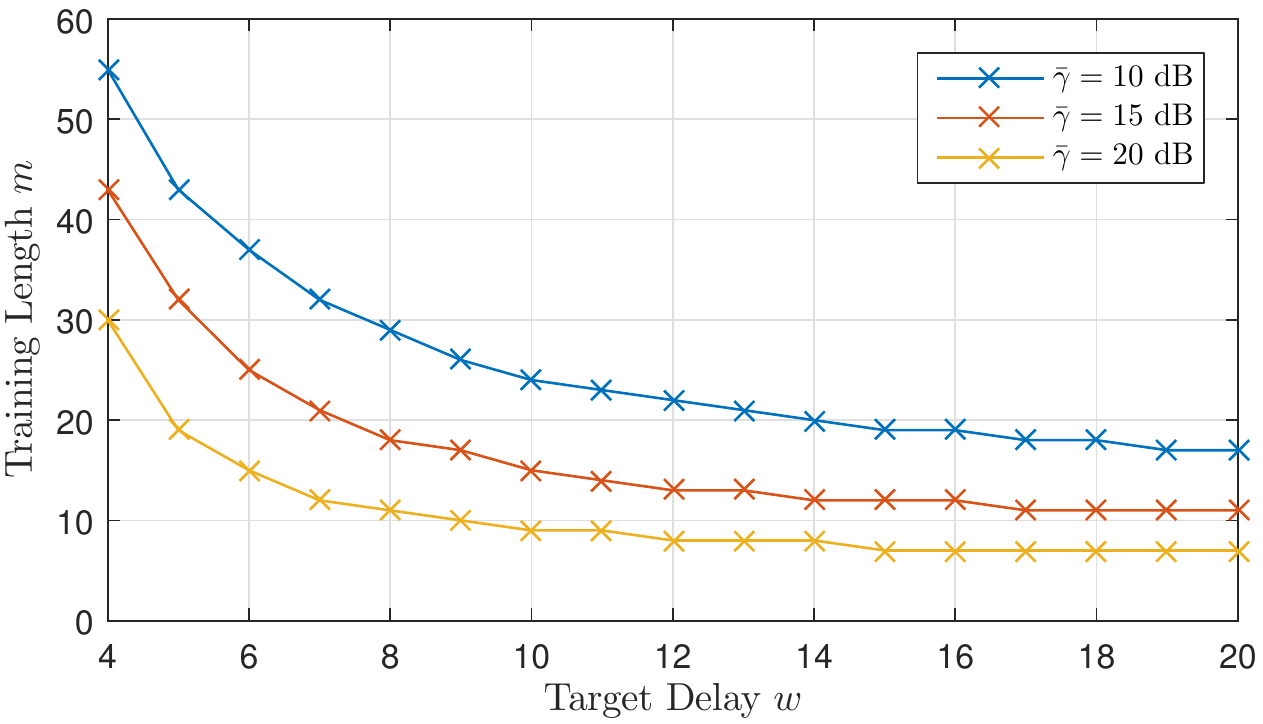}
	\caption{Optimal value of training length $\nt$ when $\ntotal=250$ against the target delay $w$, for different average SNR $\snravg \in\{10, 15, 20\}~\mathrm{dB}$.}
	\label{fig:targetdelay_x_ntopt}
\end{figure}


\section{Conclusions and Future Work}
\label{sec:conclusions}
In this work, we studied the joint impact of rate adaptation with imperfect CSI at the transmitter and finite blocklength channel coding on the delay performance of a wireless communication system. 
Based on stochastic network calculus, we found that the transmitter must adapt the rate not only to the channel estimate, but must also consider the delay requirements. 
In order to find the optimal rate adaptation, we developed a closed-form approximation for the error probability due to imperfect CSI and finite blocklength, which is invertible and can be used beyond this specific analysis.
Using this approximation, the optimal rate selection becomes a convex problem.
After validating various aspects of our work, we showed numerically that rate adaptation typically outperforms approaches that are channel-agnostic (i.e. do not rely on CSI at the transmitter). 
Furthermore, it could be shown that the optimal choice of training length depends strongly on the average SNR as well as on the delay and reliability requirements of the application. Making an optimal choice may reduce the delay violation probability of the system by an order of magnitude.
A possible extension of this work relates to multi-antenna systems, which can offer higher reliability at the cost of even more channel estimation, while CSI at the transmitter could also be used for beamforming.




\appendix

\section{Proof of Lemma~\ref{lemma_convex_g}}
\label{appendix:lemma_convex_g}
To show convexity, we show that for fixed $\snrmmse$, the second derivative of $g_1(\epsappendix)=g(\snrmmse,\epsappendix)$ is strictly positive for $\s<1$:
\begin{align}
g_1(\epsappendix) &= (1-\epsappendix)
e^{\nd\cdot\ratecomb(\snrmmse,\nd,\epsappendix)(\s-1)}
+ \epsappendix\\
&=(1-\epsappendix)
(1+\snrmmse - \sigmac(\snrmmse) Q^{-1}(\epsappendix))^{\frac{\nd}{\ln 2}(\s-1)}
+ \epsappendix\\
&=(1-\epsappendix)
(a - b Q^{-1}(\epsappendix))^{c}
+ \epsappendix
\end{align}
with constants $a,b>0$ and $c<0$. Due to $\epsappendix>Q(\snrmmse/\sigmac(\snrmmse))$, we have $\ratecomb(\snrmmse,\nd,\epsappendix)>0$ and $(a - b Q^{-1}(\epsappendix))>1$.
The first derivative is given by 
\begin{align}
\dot{g}_1(\epsappendix) =& (1-\epsappendix)
c(a - b Q^{-1}(\epsappendix))^{c-1}(- b \dot{Q}^{-1}(\epsappendix))
-(a - b Q^{-1}(\epsappendix))^{c} + 1.
\end{align}
The second derivative is given by
\begin{align}
\ddot{g}_1(\epsappendix) =& (1-\epsappendix)
c(a - b Q^{-1}(\epsappendix))^{c-1}(- b \ddot{Q}^{-1}(\epsappendix))\nonumber\\
& + (1-\epsappendix)
c(c-1)(a - b Q^{-1}(\epsappendix))^{c-2}(- b \dot{Q}^{-1}(\epsappendix))^2\nonumber\\
&-
2c(a - b Q^{-1}(\epsappendix))^{c-1}(- b \dot{Q}^{-1}(\epsappendix)).
\end{align}
From \cite{gursoy2013throughput}, the derivatives of the inverse Q-function are:
\begin{align}
\dot{Q}^{-1}(\epsappendix) &= -\sqrt{2\pi}e^{\frac{Q^{-1}(\epsappendix)^2}{2}}\\
\ddot{Q}^{-1}(\epsappendix) &= 2\pi Q^{-1}(\epsappendix) e^{Q^{-1}(\epsappendix)^2}
\end{align}
Thus, for $\epsappendix<\nicefrac{1}{2}$, $\dot{Q}^{-1}(\epsappendix) < 0$ and $\ddot{Q}^{-1}(\epsappendix) > 0$, and therefore $\ddot{g}_1(\epsappendix) > 0$.

\bibliographystyle{IEEEtran}
\bibliography{fbl_impcsi_main}


\end{document}